\documentclass[12pt]{article}
\usepackage[T1]{fontenc}
\usepackage{babel}
\usepackage{setspace}
\usepackage{eurosym}
\usepackage{xurl}
\usepackage{amsfonts}
\usepackage{nicefrac}
\usepackage{microtype}
\usepackage{hyperref}
\usepackage{fullpage}
\usepackage{graphicx}
\usepackage{booktabs}
\usepackage[round,authoryear]{natbib}
\usepackage{mathrsfs}
\usepackage{amsmath,amssymb}
\usepackage{changepage}
\usepackage{makecell}
\usepackage{microtype}
\usepackage{amsthm}
\usepackage{changepage}
\usepackage{hyperref}
\usepackage{listings}
\usepackage{graphicx}
\usepackage{subcaption}
\usepackage{bbm}
\usepackage{mathtools}
\usepackage{amsthm}
\usepackage{framed}
\usepackage{amsfonts}
\usepackage[title]{appendix}
\usepackage{setspace}
\newcommand{\Eb}{\mathbb{E}}

\newcommand{\RR}{\mathbb{R}}
\newcommand{\toP}{\xrightarrow{p}}
\newcommand{\bern}{\mathrm{Bern}}

\newcommand{\cD}{\mathcal{D}}

\newcommand{\cN}{\mathcal{N}}
\newcommand{\cA}{\mathcal{A}}
\newcommand{\cS}{\mathcal{S}}
\newcommand{\cH}{\mathcal{H}}

\newcommand{\Var}{\mathrm{Var}}
\newcommand{\Cov}{\mathrm{Cov}}

\newcommand{\KL}{{\mathrm{KL}}}

\DeclareMathOperator*{\argmin}{arg\,min}

\usepackage[normalem]{ulem}  %

\newif\ifclminor
\clminortrue

\ifclminor  %
    \newcommand{\clcomment}[1]{{\color{charlesHotPink}{[\textbf{Charles:} #1]}}}

\else  %
    \newcommand{\clcomment}[1]{}

\fi

\usepackage{url}
\usepackage{import}

\usepackage{amsmath, amssymb, bbold, mathtools}
\usepackage{footnote}
\usepackage{acronym}
\usepackage{enumerate}
\usepackage[capitalise]{cleveref}
\usepackage{multirow}
\usepackage{subcaption}
\usepackage{graphicx}

\usepackage{tikz}
\usetikzlibrary{shapes.geometric}
\usetikzlibrary{matrix,positioning}
\usepackage{pgfplots}
\pgfplotsset{compat=1.15}

\usepackage{mathrsfs}
\usetikzlibrary{arrows}
\usepackage{array}

\usepackage[noline,ruled]{algorithm2e}

\usepackage{amsthm}
\theoremstyle{plain}
\newtheorem{theorem}{Theorem}
\newtheorem{lemma}{Lemma}

\newtheorem{proposition}{Proposition}
\theoremstyle{definition}
\newtheorem{definition}{Definition}
\newtheorem{remark}{Remark}
\newtheorem{example}{Example}

\newtheorem{assumption}[theorem]{Assumption}
\crefname{theorem}{Theorem}{Theorems}
\crefname{lemma}{Lemma}{Lemmas}
\crefname{proposition}{Proposition}{Propositions}
\crefname{corollary}{Corollary}{Corollaries}
\crefname{definition}{Definition}{Definitions}

\usepackage{algorithm}
\usepackage{algpseudocode}

\title{A Bayesian Framework for Human–AI Collaboration: Complementarity and Correlation Neglect
}

\newcommand{\eachauthor}[3]{
    \begin{tabular}{@{}c@{}}
        \small \textbf{#1}\\[-0.5em]  %
        \small #2\\[-0.5em]  %
        \small \texttt{#3}  %
    \end{tabular}
}

\author{%
\centering
\setlength{\tabcolsep}{10pt}
\begin{tabular}{ccc}
\eachauthor{Saurabh Amin}{MIT}{amins@mit.edu}
&
\eachauthor{Amine Bennouna}{Northwestern University}{amine.bennouna@northwestern.edu}
&
\eachauthor{Daniel Huttenlocher}{MIT}{huttenlocher@mit.edu}
\\[1.8em]
\eachauthor{Dingwen Kong}{MIT}{dingwenk@mit.edu}
&
\eachauthor{Liang Lyu}{MIT}{lianglyu@mit.edu}
&
\eachauthor{Asuman Ozdaglar}{MIT}{asuman@mit.edu}
\\
\end{tabular}%
\vspace{0.2in}  %
}

\begin{document}
\pagenumbering{gobble} 
\maketitle

\begin{abstract}
We develop a decision-theoretic model of human–AI interaction to study when AI assistance improves or impairs human decision-making. A human decision-maker observes private information and receives a recommendation from an AI system, but may combine these signals imperfectly. We show that the effect of AI assistance decomposes into two main forces: the marginal informational value of the AI beyond what the human already knows, and a behavioral distortion arising from how the human uses the AI’s recommendation. Central to our analysis is a micro-founded measure of informational overlap between human and AI knowledge. We study an empirically relevant form of imperfect decision-making—correlation neglect—whereby humans treat AI recommendations as independent of their own information despite shared evidence. Under this model, we characterize how overlap and AI capabilities shape the Human-AI interaction regime between augmentation, impairment, complementarity, and automation, and draw key insights.
\end{abstract}

\pagenumbering{arabic}

\def\humancolor{gray}
\def\aicolor{blue}
\def\jointcolor{red}

\def\humanshadecolor{gray!15}
\def\aishadecolor{blue!10}
\def\jointshadecolor{red!10}

\def\humanshadedarkcolor{gray!30}
\def\aishadedarkcolor{blue!20}
\def\jointshadedarkcolor{red!20}

\def\AbilityAxisLabel{AI capability ($\tau_A$)}
\def\LossAxisLabel{Expected loss}
\def\OverlapAxisLabel{Overlap coefficient ($\lambda$)}

\def\HumanCurveLabel{Human alone}
\def\AICurveLabel{AI alone}
\def\JointCurveLabel{Human + AI}

\def\WithholdActionLabel{Impairment}
\def\AugmentActionLabel{Complementarity}
\def\AugmentActionLabelBreak{Complemen-\\tarity}
\def\AutomateActionLabel{Automation}

\def\HtoHAlabel{\tau_{\rm aug}(\lambda)}
\def\HAtoAlabel{\tau_{\mathrm{auto}}(\lambda)}
\def\HtoAlabel{\tau_H}

\def\ThinCurveWidth{0.6pt}
\def\MediumCurveWidth{1.0pt}
\def\ThickCurveWidth{2.4pt}

\pgfplotsset{
    intersect/.style={
        only marks, mark=*, mark size=3pt, black,
        forget plot
    },
    axis to point/.style={
        line width=0.6pt,
        densely dotted, 
        black,
        forget plot
    },
    jointshade/.style={
        draw=none, 
        fill=\jointshadecolor,
        forget plot, 
        on layer=axis background
    },
    humanshade/.style={
        draw=none, 
        fill=\humanshadecolor,
        forget plot, 
        on layer=axis background
    },
    aishade/.style={
        draw=none, 
        fill=\aishadecolor,
        forget plot, 
        on layer=axis background
    },
    jointshade dark/.style={
        jointshade,
        fill=\jointshadedarkcolor,
    },
    humanshade dark/.style={
        humanshade, 
        fill=\humanshadedarkcolor,
    },
    aishade dark/.style={
        aishade, 
        fill=\aishadedarkcolor,
    },
}

\section{Introduction}
Artificial intelligence technologies are advancing at a rapid pace, with transformative potential across many domains. AI’s greatest promise lies in its ability to amplify human decision-making: it can sift through vast volumes of data with unprecedented scale and extract task-relevant insights and recommendations. Yet the value of these recommendations ultimately depends on how they are interpreted and used by human decision-makers.

Consider an AI recommendation system used in medical diagnosis. Such a system may base its recommendations on a comprehensive analysis of prior patient outcomes and historical physician decisions. Yet doctors differ in expertise, decision styles, and how they use algorithmic advice. Whether AI improves outcomes, therefore, depends critically on how physicians integrate its recommendations into their own judgment.
For instance, If a physician fails to recognize that an AI recommendation already incorporates key test results and redundantly supplements it, systematic errors may arise.
Conversely, if the AI omits important patient-specific information and the physician over-trusts its output while discounting their own assessment, diagnostic errors may again occur.
This illustrate that the effectiveness of AI-assistance hinges not only on model accuracy, but also on the structure of human–AI interaction.

Prior literature, as we discuss in the next section, documents cases in which AI assistance improves human decision-making, as well as cases in which it degrades performance. In some settings, full automation—where AI acts alone without a human in the loop—has been shown to outperform human–AI collaboration altogether. A natural question is then when and why each of these regimes arises.

This paper adopts a decision-theoretic framework to study how AI assistance affects human decision-making. Central to our analysis is an explicit model of human–AI interaction, which allows us to characterize the informational and behavioral forces that govern distinct interaction regimes: \textit{augmentation}, when AI assistance improves human decision-making; \textit{impairment}, when AI assistance degrades it; \textit{automation}, when AI should replace human decision-making outright; and \textit{complementarity}, when AI-assisted humans outperform either humans or AI alone.

\paragraph{Contributions and Main Insights.}

We introduce a Bayesian framework to study human–AI interaction. We consider a setting in which a human decision-maker observes private information about an uncertain state and additionally receives a signal from an AI before taking an action. Critically, the human does not necessarily combine the two signals optimally.

Our first insight is a general decomposition of the realized benefit of AI assistance into two competing forces: the \textit{marginal value of AI information}, which captures the incremental decision value beyond what the human already knows, and a \textit{behavioral penalty}, which captures distortions arising from the human’s imperfect decision rule using the AI signal. We show that AI assistance impairs performance precisely when this behavioral penalty exceeds the marginal informational benefit.

Second, to study how these two forces shape human–AI interaction regimes and when each dominates, we introduce the central notion of \textit{information overlap}.
Specifically, we develop a micro-founded model of information overlap between human and AI signals. Both the human and the AI aggregate subsets of a common pool of primitive information sources. This structure yields a natural and interpretable overlap coefficient: the fraction of AI information that is redundant given human expertise. Importantly, when AI acquires information from a fixed pool of accessible sources, we show that informational overlap is approximately invariant to AI capability growth, and is thus a stable, \textit{domain-specific} parameter. This allows us to independently vary AI capability and informational overlap when analyzing joint performance. 

Third, we instantiate the human’s imperfect decision rule with an empirically relevant form: correlation neglect \cite{Agarwal2023NBERradiology,Enke2019CorrNeglect}, whereby humans treat AI recommendations as conditionally independent of their own information even when both draw on shared evidence. We explicitly characterize how information overlap and correlation neglect jointly determine when decisions are best made by the human alone, by the AI alone, or through the AI-assisted human.

Depending on the degree of overlap, three qualitatively distinct phase-transition patterns emerge. (i) When overlap is low, even imperfect humans benefit from AI access, and the system transitions smoothly from complementarity to automation as AI improves, with automation eventually dominating due to the behavioral penalty faced by AI-assisted humans. (ii) When overlap is intermediate, weak AI can strictly impair human performance due to double-counting of shared information, creating an ``impairment'' region in which withholding AI is optimal until AI capability becomes sufficiently high for complementarity to be achieved. As AI capabilities grow even stronger, automation ultimately prevails. (iii) When overlap is high, the behavioral penalty is so severe that AI assistance is never preferred, and the optimal decision-making system transitions directly from human-only (impairment) to AI-only decision-making (automation). The thresholds separating these regimes admit explicit closed-form expressions in terms of the model primitives.

A striking insight across the different phase-transition patterns is that automation is ultimately unavoidable under improving AI capability and stagnant human use of AI. This highlights that human learning in the effective use of AI is critical for pushing the automation frontier outward.

Finally, our analysis reveals a reversal of the classical informativeness result. While additional information always weakly improves decisions for a Bayesian-rational decision-maker, this monotonicity fails under correlation neglect. In particular, there exist parameter regimes in which revealing an informative AI signal strictly worsens human decision-making. This reversal does not arise because the AI signal lacks informational content, but because humans misperceive the dependence between their own information and the AI’s recommendation.

Taken together, these insights contribute to the debate on whether increasingly powerful AI systems should assist or automate decisions. Rather than viewing this choice as driven solely by AI capability, our results emphasize the central role of informational overlap and human misspecficiation in designing optimal forms of human–AI collaboration.

\section{Related Literature}
A growing body of work studies how to combine AI predictions with human expertise~\cite{Kleinberg2017QuartHumanMachine, Mullainathan2021QuartHealthcare}. Much of the focus of this literature is to study whether the combined human-AI system exhibits {\it complementarity}, performing better than human or AI alone. Even though performance improvements from AI recommendations is recorded in some settings, notably in medical imaging~\cite{Tschandl2020NatureSkinCancer}, image classification~\cite{Steyvers2022PNASbayesian}, and recently for customer assistance in a large-scale deployment in call centers~\cite{Brynjolfsson2025QuartGenAIWork}, several experimental and empirical studies show this goal is hard to achieve. For example,~\cite{Bansal2019AAAIupdates} shows that updates that can increase AI’s prediction performance may hurt team performance by increasing overreliance on AI. Adding explanations to the AI decisions does not appear to reduce overreliance, and in fact some studies showed it may even increase it~\cite{Bansal2021CHIcomplementarity}, motivating use of cognitive forcing interventions to motivate people to engage more thoughtfully with AI recommendations~\cite{Bucinca2021HCIcognitiveForcing}. Recent work~\cite{Agarwal2023NBERradiology} shows empirically that providing AI predictions to radiologists worsens diagnostic quality in certain regimes, precisely because humans must integrate AI information with their own and do so imperfectly.

Recent theoretical work has formalized when human-AI complementarity can be achieved using either models for optimally combining information held by different agents or through “learning-to-defer” approaches. In combination frameworks, the combined system’s loss is expressed as a weighted average of human and AI losses~\cite{Donahue2022FAccTcomplementarity} or a convex combination of judgments~\cite{Rastogi2023HCompComplementarity}, with weights taken as given rather than derived from how a human processes AI information. \cite{Donahue2022FAccTcomplementarity} shows that complementarity can be achieved when there are subsets of information regions where each agent has better information. \cite{Alur2024NeurIPSHumanExpertise} introduces a method to identify information regions that are indistinguishable to an agent, in which case signal from the prediction of the other agent can then be leveraged to improve predictions overall. These approaches require data from the joint distribution of agent signals. \cite{Peng2025AAAINoFreeLunch} showed a negative result in the absence of such joint information that shows, except under restrictive conditions, the impossibility of having a combined prediction better than each agents calibrated predictions.

Learning-to-defer methods take a two-stage algorithmic triaging approach, deciding when an automated model can pass the decision to a human without sharing any of its information \cite{Madras2018NeurIPSLearningToDefer, Mozannar2020ICMLLearnDeferEstimators, Lykouris2025learningdefercongestedsystems}. A key difference of these works from ours is that they assume the human decision function is exogeneous and abstract away from endogenous changes in human beliefs in response to AI information.

A closely related paper is \cite{Agarwal2025NBERsufficientStatistic}, which develops a sufficient-statistic approach to designing optimal human-AI collaboration that bypasses the structural modeling of human behavior. Their approach is powerful but relies on the assumption that the probability a human makes a correct classification depends only on the posterior probability given the AI signal. This assumption enables a reduced-form approach to optimizing over the full space of disclosure policies, but as they note, it is typically violated when human and AI signals are conditionally dependent, which is a key case of interest in our paper. Our framework decomposes the forces that govern the value of human-AI collaboration into informational overlap and behavioral misspecification and determines optimal deployment modes as a function of AI capability and signal dependence.

The behavioral mechanism studied in our paper builds on experimental evidence that people systematically neglect correlation when aggregating information from multiple sources, demonstrated in~\cite{Enke2019CorrNeglect}. This behavioral pattern extends to broader economic settings: \cite{eyster2010correlation} document correlation neglect in financial decision-making,~\cite{levy2015correlation} show it affects voting behavior, paradoxically improving aggregate information aggregation by inducing voters to overweight evidence relative to preferences, and~\cite{levy2022persuasion} show that a persuader can fully exploit correlation neglect to manipulate beliefs.

Our work relates to the information design literature~\cite{Kamenica2011AEABayesianPersuasion, Bergemann2016AEAInfoDesignPersuasion}, though it addresses a distinct problem. In standard information design, a sender chooses a signal structure to influence a Bayesian receiver's actions. In our setting, the AI signal is exogenous; its precision and overlap with human information are primitives of the environment. The question is which decision-maker should be preferred: the AI-assisted human, the human-only, or the AI-only. Our value-of-information analysis connects to the literature on signal complements and substitutes~\cite{borgers2013signals}. For a Bayesian decision-maker, the answer is immediate: 
revealing an informative signal to an aligned decision-maker cannot reduce decision quality. This conclusion breaks down under correlation neglect and our contribution is to analyze its implications for human-interaction. When the human treats correlated signals as independent, 
various regimes can occur, including AI-assistance worsening human decision quality. In our framework, we characterize how informational overlap and AI capability jointly determine the interaction regime. 

Our model assumes a human decision-maker receives an AI signal and must integrate it with their own information about an uncertain state. Two modeling innovations distinguish our approach. First, we parameterize informational overlap through a coefficient measuring the share of AI information that is redundant given human expertise, ranging from zero (the disjoint-information setting of learning-to-defer literature literature) to one (complete redundancy), with intermediate values capturing realistic cases where human and AI draw partially on shared evidence. Second, we incorporate correlation neglect: the human treats partially overlapping signals as conditionally independent, leading to double-counting of shared evidence. These features enable identifying regimes where different modes of collaboration are optimal: when overlap is low, even a correlation-neglect human always benefits from AI access. When overlap is intermediate, weak AI harms the human through double-counting, generating a region where withholding AI is optimal; only sufficiently capable AI overcomes this penalty, enabling complementarity. This withholding region arises endogenously from the interaction of information overlap and human misspecification, a mechanism absent in both loss-level combination and learning-to-defer frameworks.

\section{A Bayesian Model of Human--AI Interaction}

Let $Y \in \mathcal{Y}$ denote a random variable representing the true state of the world and the underlying uncertainty faced by the decision-maker. The human's knowledge is modeled as a random variable $H \in \mathcal{H} $ that provides information about the state of the world $Y$. Similarly, the AI system produces a signal $A \in \mathcal{A} $, also modeled as a random variable informative about $Y$.

The task for the human is to choose a decision $d \in \mathcal{D}$ so as to minimize an expected loss that depends on the realized state of the world. The loss function is given by
\[
L : \mathcal{Y} \times \mathcal{D} \to \mathbb{R}.
\]
If the human were to observe the true state $Y$ directly, the optimal decision would be
\[
\delta(Y) = 
\arg\min_{d \in \mathcal{D}} L(Y, d).
\]

In practice, however, the state of the world is uncertain. The human does not observe $Y$ and instead has access only to their own information $H$ together with the AI signal $A$. After observing both signals, the human chooses a decision according to a decision rule
\[
\delta : \mathcal{H} \times \mathcal{A} \to \mathcal{D},
\]
which maps realizations of $H$ and $A$ into a decision.
The expected loss incurred by the AI-assisted human is therefore
\[
\mathbb{E}_{Y,H,A}\big[ L\big( Y, \delta(H,A) \big) \big].  %
\]
This loss is the central object of interest in our paper. Our goal is to understand which aspects of the problem—relating to the environment, the human decision-maker, the AI system, and their interaction—shape this loss, and thereby contribute to either the success or the failure of human--AI collaboration.

\begin{example}[AI-Assisted Clinical Diagnosis]\label{eg: clinical diagnosing}
Consider the task of diagnosing a patient for a particular disease. The state of the world $Y$ indicates whether the patient has the disease, and its prior distribution reflects the disease prevalence in the population. The loss function corresponds to a classification loss, capturing the costs associated with false positives and false negatives.

The human decision-maker is a physician whose knowledge consists of medical training, prior experience, and observations made during the consultation. This information, together with diagnostic tests and interactions with the patient, is summarized by the random variable $H$, which is correlated with the true disease state. For example, observing that a patient appears fatigued may increase the physician’s posterior belief that the disease is present.

The AI system has access to information summarized by the signal $A$. This includes medical textbooks in its training set, large-scale historical medical records and patterns learned from prior cases. The AI signal is informative about the disease state through its correlation with $Y$, for instance, by comparing the current patient to similar historical cases.

Both $H$ and $A$ can be interpreted as summary statistics of richer underlying information. In simple settings, they may correspond to probabilistic assessments—for example, the physician’s and the AI’s estimated probability that the patient has the disease, although they may also represent more detailed or structured forms of communication. The physician then combines these two signals to form a diagnosis, represented by the decision rule $\delta(H,A)$.
\end{example}

\section{The Value of Information}
Before analyzing human--AI interaction, we first introduce a principled way to compare the usefulness of
different signals: in particular, the value of the AI signal to the human decision-maker. We define the \emph{value of information} as a \emph{decision-theoretic} object:
it measures how much a signal could reduce expected loss if it were used \emph{optimally}. This serves as an evaluation of the fundamental value of the signals. In practice, performance of the AI-assisted human can be lower because humans may not use the signal optimally---they may misinterpret, overweight, or underweight AI recommendations. We will capture this aspect later in the paper.

\begin{definition}[Value of Information]  \label{def: value}
Let $S\in\cS$ be a signal (e.g., $H$, $A$, or the concatenation signal $(H,A)$). Define the no-information baseline Bayesian risk
\[
L_0:=\inf_{\delta_0\in \mathcal{D}} 
\mathbb{E}_{Y}\big[ L\big( Y, \delta_0 \big) \big].
\]
The value of information of $S$ is defined as
\[
v(S) := L_0-\inf_{\delta: \mathcal{S} \to \mathcal{D}} 
\mathbb{E}_{Y,S}\big[ L\big( Y, \delta(S) \big) \big]\geq 0.
\]
\end{definition}

The quantity $v(S)$ is the maximum achievable risk reduction from observing $S$, relative to the
best constant action. Nonnegativity follows because the decision-maker can always ignore the signal
and choose the baseline action. Notably, $v(S)$ is an upper bound on the benefit of access to $S$ under
any constrained or misspecified decision rule.

For a fixed task, successive generations of AI systems can be viewed as generating different AI signals $A_1,A_2,\ldots$ with different informativeness about $Y$. Performance improvement of AI models on standard \emph{static} benchmarks (e.g., a fixed classification task)
can therefore be interpreted as evidence of increasing $v(A)$: the standalone decision value of the AI
signal when used optimally. Importantly, however, for AI-assistance the relevant object is
typically not $v(A)$ but the \emph{joint} value $v(A,H)$, which we will analyze soon.

This definition of information value is built from the task (loss function) and the joint distribution of $(Y,S)$,
directly capturing \emph{decision relevance} of a signal. Familiar statistical quantities such as conditional variance and mutual
information naturally emerge when we specify the problem instance.

\begin{example}[Regression]\label{eg: regression}
Suppose the state is real-valued $Y\in\RR$ and the loss function is quadratic: $L(Y,d)=(Y-d)^2$. Then we have
\[
v(S)=\Var(Y)-\Var (Y\mid S).
\]
\end{example}

\begin{example}[Classification]
Suppose the state is binary $Y\in\{0,1\}$ and the loss function is log loss: $L(Y,d)=-Y\log d-(1-Y)\log (1-d)$. The value of a signal $S$ is
\[
v(S)=I(Y;S),
\]
where $I$ denotes the mutual information.
\end{example}

\subsection{Joint value and marginal value}
In our setting the human has access to $H$, and may additionally observe the AI signal $A$. Under
Bayesian-optimal inference, the best achievable loss by the AI-assisted human is:
\[
\inf_{\delta: \cH \times \cA \to \cD} \Eb[L(Y, \delta(H,A))]=L_0-v(H,A).
\]
where $v(H,A)$ is shorthand for the value of the concatenated signal $(H,A)$.

This identity clarifies an important point for model choice and deployment: even if two AI systems
have different standalone values $v(A)$, what matters for a fixed human decision-maker is how much
additional value the AI provides \emph{on top of} the human's information:
\begin{definition}[Marginal Value of Information]  \label{def:marg_info}
The marginal value of AI information $A$ complementing human information is defined as
\begin{align*}
v(A \mid H) &:=
\inf_{\delta: \mathcal{H} \to \mathcal{D}}
\mathbb{E}\left[ L(Y, \delta(H)) \right]
-
\inf_{\delta: \mathcal{H}\times\mathcal{A} \to \mathcal{D}}
\mathbb{E}\left[ L(Y, \delta(H,A)) \right]\\
&=v(A,H)-v(H)
\end{align*}
\end{definition}

The marginal value of AI, $v(A\mid H)$, isolates the incremental decision value of AI \textit{conditional on the human's
knowledge}. It is the appropriate measure for evaluating AI \emph{as assistance}, rather than AI \emph{in
isolation}. 

Interestingly, the marginal value of AI, $v(A\mid H)$, can be larger or smaller than its standalone value $v(A)$, depending
on whether the signals are informational complements or substitutes \citep{borgers2013signals}. Note that
\[
v(A\mid H)>v(A) \iff v(A,H)>v(A)+v(H),
\]
Thus, $v(A\mid H)$ exceeds $v(A)$ exactly when the joint value is \emph{superadditive}: the two signals
create synergy in decision value. Conversely, when $A$ largely replicates the evidence already in $H$,
we expect $v(A,H)\approx v(H)$, so that $v(A\mid H)$ is much smaller than $v(A)$. The following proposition shows that their ratio can be arbitrarily large.

\begin{proposition}
\label{prop:relative_complement}
For any $t\in[0,+\infty)$, there exists a problem instance such that
\[
\frac{v(A\mid H)}{v(A)}=t.
\]
\end{proposition}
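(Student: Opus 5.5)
We will prove the proposition by an explicit Gaussian construction under quadratic loss (Example~\ref{eg: regression}), for which $v(S)=\Var(Y)-\Var(Y\mid S)$ and every conditional variance is available in closed form. The idea is to vary how $A$ and $H$ overlap so that the ratio $v(A\mid H)/v(A)$ sweeps continuously over all of $[0,\infty)$.

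\emph{The range $t\in[0,1]$ (substitutes).} Let $Y\sim\cN(0,1)$ and set $H=Y+\varepsilon_H$ and $A=Y+\varepsilon_A$. The noise is built as $\varepsilon_A=\varepsilon_H+\eta$, with $\varepsilon_H\sim\cN(0,\sigma^2)$ and $\eta\sim\cN(0,s^2)$ independent of each other and of $Y$. This makes $A$ a garbling of $H$, so $v(A\mid H)=0$, which gives $t=0$.

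To cover the rest of $[0,1]$ we would instead use independent noises $\varepsilon_H,\varepsilon_A$ of variances $\sigma_H^2,\sigma_A^2$. Writing precisions $p_H=1/\sigma_H^2$ and $p_A=1/\sigma_A^2$, we have $\Var(Y\mid S)=1/(1+\sum p)$. This gives
\[
v(A)=\frac{p_A}{1+p_A},\qquad v(A\mid H)=\frac{p_A}{(1+p_H)(1+p_H+p_A)}.
\]
The ratio $\frac{1+p_A}{(1+p_H)(1+p_H+p_A)}$ equals $1$ at $p_H=0$, is continuous in $p_H$, and tends to $0$ as $p_H\to\infty$. By the intermediate value theorem every $t\in(0,1]$ is attained, and $t=0$ comes from the garbling instance above.

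\emph{The range $t>1$ (complements).} Take $Y=X_1+X_2$ with $X_1\sim\cN(0,1)$ and $X_2\sim\cN(0,c)$, $c\ge0$. Let $H=X_1$, and let the AI report $A=X_2+\rho X_1$ with $\rho\in\RR$. Then $(H,A)$ reveals $Y$ exactly, so $v(A,H)=\Var(Y)=1+c$ and $v(H)=1$, giving $v(A\mid H)=c$.

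The standalone value is the squared covariance over the variance:
\[
v(A)=\frac{\Cov(Y,A)^2}{\Var(A)}=\frac{(\rho+c)^2}{\rho^2+c}.
\]
Choose $\rho=-c$ with $c>0$. Then $v(A)=0$, so we instead choose $\rho$ near $-c$. With $c$ fixed, $v(A)$ is continuous in $\rho$, takes the value $0$ at $\rho=-c$, and takes the value $c$ at $\rho=0$. Hence the ratio $c/v(A)$ takes every value in $[1,\infty)$ as $\rho$ moves from $0$ toward $-c$.

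The main point needing care is the degenerate case $v(A)=0$. We avoid it entirely by using only $\rho$ strictly between $-c$ and $0$ in the complements range, and instances with $p_A>0$ in the substitutes range, so that $v(A)>0$ and the ratio is well defined in every instance used.

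Combining the two families covers all of $[0,\infty)$. The remaining work is routine Gaussian conditional-variance computation.
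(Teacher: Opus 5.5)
Your proof is correct, but it follows a different route from the paper's.

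\textbf{The paper's construction.} It works with a binary state $Y\sim\bern(\frac12)$ under log loss, so that $v(S)=I(Y;S)$.
\begin{itemize}
\item For $t\ge1$ it takes $H=V$ and $A=Y\oplus U\oplus V$, with independent noise bits $U\sim\bern(p)$ and $V\sim\bern(q)$. The ratio is then $(1-h(p))/(1-h(p+q-2pq))$, which runs continuously from $1$ at $q=0$ to $\infty$ as $q\uparrow\frac12$.
\item For $t\in[0,1)$ it lets $H$ be an exact copy of $A=Y\oplus U$ with probability $1-t$, and a null signal otherwise. This gives the ratio $t$ exactly, with no intermediate-value step.
\end{itemize}

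\textbf{Your construction.} You stay inside the Gaussian--quadratic (regression) setting used in the paper's main analysis, where everything reduces to conditional variances.
\begin{itemize}
\item A garbling gives $t=0$.
\item Conditionally independent noises give the ratio $(1+p_A)/((1+p_H)(1+p_H+p_A))$, which decreases strictly from $1$ to $0$ in $p_H$.
\item Your family $H=X_1$, $A=X_2+\rho X_1$ is the linear analogue of the paper's XOR. At $\rho=-c$ the AI signal is independent of $Y$, yet together with $H$ it reveals $Y$ exactly.
\end{itemize}

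\textbf{What each buys.} Your version shows the phenomenon is not an artifact of log loss or discrete signals. Your substitutes formula also makes explicit that conditionally independent Gaussian noises can never give $t>1$. So the complements range genuinely needs a structure like $H=X_1$, which is not a conditionally independent noisy reading of $Y$. The paper's erasure construction is a bit more economical for $t<1$, since it hits $t$ exactly.

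\textbf{Two small points to tidy up.}
\begin{itemize}
\item Say explicitly that $X_1$ and $X_2$ are independent and that $c>0$. You use both in $\Var(Y\mid X_1)=c$ and in $\Var(A)=\rho^2+c>0$.
\item Note that $p_H=0$ stands for an uninformative $H$. Nothing hinges on this, though, since $t=1$ is already attained by your complements family at $\rho=0$.
\end{itemize}
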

	The ratio $t=v(A\mid H)/v(A)$ can be read as a measurement for relative complementarity.
	Here, we provide two concrete examples where $t$ can take extreme values in the medical context.
		\begin{itemize}
			    
				    \item (very large $t$) Suppose $A$ and $H$ are conditionally independent given the disease state $Y$ because they draw on different evidence sources: e.g., the AI aggregates large-scale historical patterns and laboratory test interpretation, while the clinician contributes bedside observations and contextual information not present in the data. If the disease is extremely rare (very low prior) and treatment requires surpassing a threshold diagnosis probability, then neither the human alone nor the AI alone can achieve a high enough diagnosis confidence to trigger treatment, despite high true positive and low false positive rates. In this case, $v(A)$ and $v(H)$ can be near zero. Yet, combining the two independent pieces of evidence can shift the posterior significantly, crossing the threshold, making $v(H,A)$ large and hence $v(A\mid H)$ large.
			    
				    For example, suppose the disease prevalence is $0.1\%$; the human-alone and AI-alone true positive rates are both $70\%$, and their false positive rates are both $1\%$. In this case, a single positive AI or human flag yields only a $6.5\%$ posterior confidence in the presence of the disease. However, when both flags are positive, the posterior confidence jumps to $83.1\%$, which is much more likely to trigger treatment.

			    \item (very small $t$) If the AI is trained on and primarily summarizes the same evidence the expert already uses (e.g., the same X-ray image), then $A$ is highly redundant relative to $H$. The AI may still have standalone value for non-experts ($v(A)>0$), but conditional on an expert's signal its marginal
		value is small ($v(A\mid H)\approx 0$).
		\end{itemize}
A key conceptual implication is that \textit{standalone AI performance benchmarks can be misleading
for AI-assistance to a human}: the same improvement in $v(A)$ can translate into very different changes
in $v(A\mid H)$ depending on the overlap structure between $A$ and $H$. In particular, improvements in standalone AI performance $v(A)$ may
come from becoming \emph{better at redundant information}, which has little impact on the marginal value $v(A\mid H)$. 
In other words, better AI (higher $v(A)$) does not necessarily lead to better AI-assisted human performance (higher $v(A\mid H)$); the relationship can be complex and non-monotone. Throughout the paper, we will study in depth how AI capability impacts  AI-assisted human performance and identify critical aspects that govern human-AI interaction.

\subsection{Misspecification and complementarity gap}
So far, the marginal value $v(A\mid H)$ is defined under \emph{optimal} usage of the AI signal $A$ by the human.
However, making the Bayesian-optimal decision requires complete knowledge of the joint distribution of $(Y,H,A)$: the AI's reliability, the
human's own expertise, and crucially the dependence structure between $H$ and $A$. In particular, it would require understanding when the two sources tend to agree or disagree, and in which circumstances each is more likely to be correct. Clearly, human beliefs are far from such detailed knowledge in most real settings.

We therefore model the human as using a (possibly ``misspecified'') decision rule $\hat{\delta}$, leading
to realized  AI-assisted human loss
\[\mathbb{E}_{Y,H,A}\big[ L\big( Y, \hat{\delta}(H,A) \big) \big].
\]

Different instances of $\hat{\delta}$ correspond to different human behavioral biases (e.g., correlation neglect, overtrust, limited attention). In Section \ref{subsec:correlation_neglect}, we will study in-depth one such example of human \textit{behavioral misspecification}.

The next theorem characterizes two forces determining the performance of the  AI-assisted human: the marginal value
of AI information and the degree of misspecification in how the human uses it.

\begin{theorem}[Complementarity Gap]\label{thm: complementarity gap}
Suppose $L$ is a Bregman loss (e.g., squared loss, cross-entropy). Let $\hat{\delta}$ be the decision rule of the imperfect human decision-maker, and let $\delta^\star$ be the Bayesian-optimal decision rule. Then
\[
\underbrace{\inf_{\delta: \mathcal{H} \to \mathcal{D}}\, \mathbb{E}_{Y,H}\big[L(Y, \delta(H))\big]}_{\text{Human Alone}}
-
\underbrace{\mathbb{E}_{Y,H,A}\big[L(Y, \hat{\delta}(H,A))\big]}_{\text{Realized Human-AI}}
=
\underbrace{v(A\mid H)}_{\text{Marginal Value of AI}}
-
\underbrace{\mathbb{E}_{H,A}\big[L(\delta^\star(H,A),\hat{\delta}(H,A))\big]}_{\text{Human Misspecification}}
\]
\end{theorem}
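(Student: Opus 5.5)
The plan is to rewrite each term on the left-hand side through the value-of-information identities of Definitions~\ref{def: value} and~\ref{def:marg_info}, and then use the one structural property of Bregman losses that matters here. Write $L(y,d)=D_\phi(y,d)=\phi(y)-\phi(d)-\langle\nabla\phi(d),y-d\rangle$ for a convex $\phi$; for cross-entropy, take $y$ to be the one-hot encoding and $\phi$ the negative entropy. The key fact is the Bregman Pythagorean identity. For any $\sigma(H,A)$-measurable decision $D$, conditioning on $(H,A)$ gives
\[
\Eb\big[D_\phi(Y,D)\big]=\Eb\big[D_\phi(Y,\mu)\big]+\Eb\big[D_\phi(\mu,D)\big],\qquad \mu:=\Eb[Y\mid H,A].
\]
This follows by expanding $D_\phi(Y,D)-D_\phi(Y,\mu)-D_\phi(\mu,D)$. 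The terms in $\phi(D)$ and $\nabla\phi(D)$ collapse to $-\langle\nabla\phi(\mu)-\nabla\phi(D),\,Y-\mu\rangle$. This quantity is a $\sigma(H,A)$-measurable vector paired with $Y-\mu$, and the conditional mean of $Y-\mu$ given $(H,A)$ is zero. So its expectation vanishes.

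Two consequences follow. First, the infimum over $\delta:\cH\times\cA\to\cD$ is attained at $\delta^\star(H,A)=\Eb[Y\mid H,A]$, because the second term is nonnegative and vanishes there. This identifies the Bayesian-optimal rule with the conditional mean, assuming $\cD$ contains the convex hull of $\mathcal{Y}$ so that the mean is feasible. Second, applying the identity with $D=\hat\delta(H,A)$ gives
\[
\Eb\big[L(Y,\hat\delta(H,A))\big]=\inf_{\delta:\cH\times\cA\to\cD}\Eb\big[L(Y,\delta(H,A))\big]+\Eb_{H,A}\big[L(\delta^\star(H,A),\hat\delta(H,A))\big].
\]
The left-hand side of the theorem is then
\[
\inf_{\delta:\cH\to\cD}\Eb[L(Y,\delta(H))]-\inf_{\delta:\cH\times\cA\to\cD}\Eb[L(Y,\delta(H,A))]-\Eb\big[L(\delta^\star,\hat\delta)\big].
\]
By Definition~\ref{def:marg_info}, the first difference is exactly $v(A\mid H)$, which gives the claim.

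I expect the main obstacle to be the cross-term computation in the Pythagorean identity, together with the technical conditions needed to justify it. These are integrability of $\phi(Y)$ and of $\nabla\phi(D)\cdot(Y-\mu)$, so that the tower property can be applied. One also needs to handle cross-entropy cleanly: there $D_\phi$ reduces to the KL form, and boundary predictions in $\{0,1\}$ can produce infinite loss, so one assumes the relevant expectations are finite. I would state these as standing regularity assumptions rather than develop them. Note that the misspecification term uses the loss with the Bayes prediction in the state slot, $L(\delta^\star,\hat\delta)=D_\phi(\delta^\star,\hat\delta)$. This is exactly what the identity produces, since the argument order of $D_\phi$ matters.
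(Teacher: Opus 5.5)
Your proposal is correct and follows essentially the paper's route: the paper cites the Bregman three-point lemma of Banerjee et al.\ together with optimality of the conditional mean, applies it to $\hat\delta(H,A)$, and subtracts the result from the definition of $v(A\mid H)$. You re-derive that lemma directly, and your cross term should read $+\langle\nabla\phi(\mu)-\nabla\phi(D),\,Y-\mu\rangle$, a harmless sign slip since its expectation vanishes either way; the paper's extra application of the lemma to $\mathbb{E}[Y\mid H]$, which writes $v(A\mid H)$ itself as an expected divergence, is not needed for the identity, as your argument shows.
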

The identity decomposes the \emph{realized} impact of the human's use of the AI into two terms: the first term $v(A\mid H)$ captures the maximum improvement a Bayes-optimal decision-maker could extract
from observing the AI signal; the second term is a \emph{behavioral penalty}: under Bregman losses, it can be interpreted as an
expected divergence between the Bayes-optimal action and the action induced by the human's misspecified decision rule. 

Notice that the right-hand side can be negative, which means the AI assisted human is worse than the human deciding alone. This is the case when the human's missuse of the AI (captured by the behavioral penalty) overwhelms the marginal value of the AI signal.

A key remark is that these two terms are not independent, and their interaction can be subtle: an AI with higher marginal value can also be more severely misused by the human. In the following, we disentangle the key drivers behind these forces and analyze how they interact in shaping the AI-assisted human performance.

\section{A Conceptual Model of Overlap}\label{sec:conceptual}

So far, our framework has identified two forces that govern the
performance of an AI-assisted human. The first is informational: the marginal value of AI information $v(A\mid H)$, which is
determined by how much genuinely new information the AI provides beyond what the human already
knows. The second is behavioral: human missuse of AI,
which can prevent the human from optimally extracting that incremental AI information. 

Our goal now is to understand how these two forces interact in shaping the performance of the AI-assisted human across domains. We seek to address the following natural questions: When does introducing an AI signal improve human decision-making, and when can it harm it? When is automation preferable, and when does augmentation yield complementarity?

We study these questions from the perspective of a fixed human decision-maker with expertise $H$ and a (possibly misspecified) decision rule $\hat{\delta}$. Across domains, two critical features vary: AI capability (how good the AI model is at the domain's task) and ``informational overlap'' between $H$ and $A$ (how redundant the AI's information is relative to human expertise).
While AI capability can be measured through $v(A)$,
the notion of overlap is more subtle as it depend on the structure of the joint distribution of $H\mid Y$ and $A\mid Y$. This section is dedicated to the study of this critical notion.

Specifically, we introduce a micro-founded signal-construction model from which a measure of overlap naturally emerges. We assume that
the human signal $H$ and the AI signal $A$ are built by aggregating a common pool of primitive cues. This will make the \emph{overlap structure} between $H$ and $A$ explicit and measurable through an
\emph{overlap coefficient} $\lambda$.

\subsection{Model}\label{sec:conceptual_model}
Let the state $Y$ be normally distributed,
\[
Y \sim \cN(\mu_0,\tau_0^{-1}),
\]
and consider quadratic loss $L(Y,d)=(Y-d)^2$, so posterior variance directly maps into Bayesian risk.

Suppose the environment contains a collection of
conditionally independent Gaussian primitive cues,\footnote{The following analysis readily extends to heterogeneous primitive precisions $\{\tau_i\}$ as long as their variation is bounded, so that no single cue asymptotically dominates. See Appendix~\ref{appendix:hetero} for details.}%
\[
\cS=\{s_1,\ldots,s_N\},\qquad
s_i \perp\!\!\!\perp s_j \mid Y,\qquad
s_i \mid Y \sim \cN\left(Y,N\tau^{-1}\right).
\]
The scaling $N\tau^{-1}$ normalizes total information in the world: pooling all $N$ cues yields a
signal with precision $\tau$. 

The human observes a subset $\cH\subseteq \cS$ and forms the sufficient statistic (the sample mean)
\[
H := \frac{1}{|\cH|}\sum_{s_i\in\cH}s_i,
\qquad\text{so that}\qquad
H\mid Y \sim \cN\left(Y,\frac{N}{|\cH|}\tau^{-1}\right).
\]
Similarly, the AI observes $\cA\subseteq\cS$ and outputs
\[
A := \frac{1}{|\cA|}\sum_{s_i\in\cA}s_i,
\qquad\text{so that}\qquad
A\mid Y \sim \cN \left(Y,\frac{N}{|\cA|}\tau^{-1}\right).
\]
We summarize the standalone strength of the human and AI signals by
their conditional precisions
\[
\tau_H := \Var(H\mid Y)^{-1}=\frac{|\cH|}{N}\tau,
\qquad
\tau_A := \Var(A\mid Y)^{-1}=\frac{|\cA|}{N}\tau.
\]
In this construction, increasing AI capability corresponds to increasing $|\cA|$ (the number of
distinct cues, or data points, that the AI model effectively aggregates), which increases $\tau_A$ up to the environmental cap. In fact, $v(A)=\tau_0^{-1}-(\tau_0+\tau_A)^{-1}$, strictly increasing with $\tau_A$.

\subsection{Decomposition lemma and overlap coefficient}
The key domain-dependent object is how much of the AI's information is redundant given what
the human already knows. The next lemma formalizes this by decomposing the AI signal into
(i) a component predictable from the human signal, and (ii) an orthogonal innovation component.

\begin{lemma}[Decomposition Lemma]\label{lem:decomposition}
Conditional on $Y$, the pair $(H,A)$ is jointly Gaussian distributed. Define the \emph{overlap coefficient}
\[
\lambda
:=\frac{\Cov(H,A\mid Y)}{\Var(H\mid Y)}.
\]
Define the \emph{innovation signal}
\[
\tilde A := \frac{A-\lambda H}{1-\lambda}.
\]
Then we can write the decomposition
\[
A=\lambda H+(1-\lambda)\tilde A,
\]
where $\tilde A\mid Y\sim \cN(Y,\tilde\tau^{-1})$ and $\tilde A\perp\!\!\!\perp H\mid Y$, with precision
\[
\tilde\tau
=\left(\Var(\tilde A\mid Y)\right)^{-1}
=\frac{(1-\lambda)^2}{\tau_A^{-1}-\lambda^2\tau_H^{-1}}.
\]
Moreover, under the micro-founded cue construction,
\[
\lambda=\frac{|\cA\cap\cH|}{|\cA|}.
\]
\end{lemma}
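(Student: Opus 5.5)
The plan is to work conditionally on $Y$ throughout and use the cue construction directly. Write $H = Y + \varepsilon_H$ and $A = Y + \varepsilon_A$, where $\varepsilon_H = |\cH|^{-1}\sum_{s_i\in\cH}(s_i - Y)$ and similarly for $\varepsilon_A$. The cue noises $s_i - Y$ are i.i.d.\ $\cN(0,N\tau^{-1})$ and independent of $Y$, so $(\varepsilon_H,\varepsilon_A)$ is a linear image of a Gaussian vector. Hence $(H,A)\mid Y$ is jointly Gaussian, with means $(Y,Y)$ and covariance matrix independent of $Y$. This gives the first claim.

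Next I compute $\lambda$ by bilinearity. Cross terms with $i\neq j$ vanish by conditional independence, so
\[
\Cov(H,A\mid Y)=\frac{1}{|\cH||\cA|}\sum_{s_i\in\cA\cap\cH}\Var(s_i\mid Y)=\frac{|\cA\cap\cH|\,N\tau^{-1}}{|\cH||\cA|}.
\]
Dividing by $\Var(H\mid Y)=N\tau^{-1}/|\cH|$ gives $\lambda=|\cA\cap\cH|/|\cA|$.

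For the decomposition, the identity $A=\lambda H+(1-\lambda)\tilde A$ is immediate from the definition of $\tilde A$, provided $\lambda\neq1$. I will note that the case $\lambda=1$, i.e.\ $\cA\subseteq\cH$, must be excluded or treated as degenerate. Since $\tilde A$ is linear in $(H,A)$, the pair $(H,\tilde A)$ is jointly Gaussian given $Y$. Its conditional mean is $(Y-\lambda Y)/(1-\lambda)=Y$. Its covariance with $H$ is
\[
\frac{\Cov(A,H\mid Y)-\lambda\Var(H\mid Y)}{1-\lambda}=0
\]
by the definition of $\lambda$. For jointly Gaussian variables, zero conditional covariance gives $\tilde A\perp\!\!\!\perp H\mid Y$. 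The choice of $\lambda$ is exactly the regression coefficient of $A$ on $H$ given $Y$.

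Finally, for the precision I expand the variance:
\[
\Var(A-\lambda H\mid Y)=\tau_A^{-1}-2\lambda\,\Cov(A,H\mid Y)+\lambda^2\tau_H^{-1}=\tau_A^{-1}-\lambda^2\tau_H^{-1},
\]
using $\Cov(A,H\mid Y)=\lambda\tau_H^{-1}$. Dividing by $(1-\lambda)^2$ and inverting gives the stated $\tilde\tau$.

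The only real subtlety is whether this variance is strictly positive. I will argue via the cue model: $A-\lambda H$ has a nonzero coefficient on any cue in $\cA\setminus\cH$. If $\cA\setminus\cH$ is empty, then $\lambda=1$, the excluded case. So $\tilde\tau$ is finite and positive. One should also check that this holds even when $\cH\setminus\cA$ is empty, where $\lambda H$ contributes. The obstacle is therefore just handling this degenerate boundary cleanly; the rest is routine Gaussian algebra.
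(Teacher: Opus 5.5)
Your proof is correct and follows essentially the same route as the paper. Joint Gaussianity together with zero conditional covariance, $\Cov(\tilde A,H\mid Y)=0$, gives $\tilde A\perp\!\!\!\perp H\mid Y$, and the bilinear covariance sum over $\cA\cap\cH$ gives $\lambda=|\cA\cap\cH|/|\cA|$. The differences are minor: you expand $\Var(A-\lambda H\mid Y)$ directly, whereas the paper reads $\tilde\tau$ off $\Var(A\mid Y)=\lambda^2\Var(H\mid Y)+(1-\lambda)^2\Var(\tilde A\mid Y)$. You also verify joint Gaussianity, the conditional mean of $\tilde A$, and positivity of $\Var(A-\lambda H\mid Y)$ whenever $\cA\not\subseteq\cH$, and you flag the degenerate case $\lambda=1$; the paper leaves all of these implicit.
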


Here, the coefficient $\lambda$ is the regression coefficient of $A$ on $H$ (conditional on $Y$), and in our
micro-foundation it has a sharp combinatorial meaning:
\[
\lambda=\frac{|\cA\cap\cH|}{|\cA|}
=\frac{\text{AI cues already known by the human}}{\text{total AI cues}}.
\]

Thus, the overlap coefficient $\lambda$ directly measures the share of the AI signal that is redundant \textit{from the human's
perspective}. Two extreme cases are:
\begin{itemize}
\item If $\lambda\approx 0$, then the AI mostly draws on cues the human does not have; $A$ is largely
novel information, and we expect strong potential for human-AI complementarity.
\item If $\lambda\approx 1$, then the AI is essentially re-aggregating what the human already knows. In this case, even a very accurate AI may add little marginal value, which could be easily overwhelmed by behavioral penalty.
\end{itemize}

The innovation signal $\tilde A$ is the fundamental additional information that the AI provides beyond what the human already knows. The decomposition shows that observing $(H,A)$ is equivalent to
observing $(H,\tilde A)$, where $H$ and $\tilde A$ are conditionally independent given $Y$. Therefore, the marginal value of AI information reduce to the precision of the orthogonal innovation signal, $\tilde\tau$.
Precisely, it takes a closed form:
\[
v(A\mid H)=\Var(Y\mid H)-\Var(Y\mid H,A)
=\frac{1}{\tau_0+\tau_H}-\frac{1}{\tau_0+\tau_H+\tilde\tau}.
\]
This highlights a key conceptual message: for the AI-assisted human performance, what matters is not $\tau_A$
per se, but the incremental precision $\tilde\tau$.

\subsection{Interpreting overlap coefficient}  \label{sec:sampling_two_worlds}
\newcommand{\cSacc}{\cS_{\text{ai-acc}}}
\newcommand{\cSnoacc}{\cS_{\text{ai-noacc}}}
Lemma~\ref{lem:decomposition} shows that the joint information structure of $(H,A\mid Y)$ is
summarized by two ingredients: (i) how strong the AI signal is (i.e., $\tau_A$ or equivalently $v(A)$), and (ii) how redundant it
is relative to the human (i.e., $\lambda$). These two parameters are sufficient to determine the incremental
precision $\tilde\tau$, and therefore pin down $v(A\mid H)$ and, under behavioral models, the AI-assisted human loss.

In practical settings, there are domains where AI draws primarily on information that human does not have, and there are also domains where AI is essentially re-aggregating human knowledge. We argue that, for a fix domain, and within a time period where the information source of AI is relatively stable (e.g., mostly from the web text), the overlap coefficient $\lambda$ remains stable across AI capability generations (e.g., from GPT-3 to GPT-4 to GPT-5). In particular, we instantiate the model in Section~\ref{sec:conceptual_model} on a setting where the AI randomly acquires new cues from its accessible pool (which is determined by the domain's information structure). We show that $\lambda$ is approximately invariant to increases in $|\cA|$ (equivalently, $\tau_A$, AI capability). This implies that the overlap coefficient $\lambda$ is primarily \textit{domain-specific}, while AI capability growth
primarily raises $\tau_A$ without affecting $\lambda$. 
This allow us to vary both AI capability $\tau_A$ and domain overlap $\lambda$ independently in analyzing how they shape the performance of the AI-assisted human. 

Formally, Suppose the primitive cue space splits into cues the AI can potentially access (e.g., historical records,
test results) and cues it cannot (e.g., certain real-time physical observations):
\[
\cS=\cS_{\text{ai-acc}}\;\cup\;\cS_{\text{ai-noacc}},
\qquad
\cS_{\text{ai-acc}}\cap \cS_{\text{ai-noacc}}=\emptyset.
\]
Let $|\cS_{\text{ai-acc}}|=m|\cS|$ and $|\cH\cap \cS_{\text{ai-acc}}|=k|\cS|$. For a given model
generation, the AI forms $\cA$ by sampling a subset of size $|\cA|=a|\cS|$ \emph{uniformly at random}
from $\cS_{\text{ai-acc}}$. We consider the regime where $|\cS|$ is large (information is sufficiently
granular), and we use the ratios $(a,m,k)$ to describe information-set sizes. AI capability growth is
captured by increasing $a$ from $0$ (no usable data) up to $m$ (exhaust all AI-accessible data). The setup is illustrated in Figure~\ref{fig:sampling_two_worlds}.

\begin{figure}

\centering

\usetikzlibrary{arrows.meta,calc,positioning,decorations.pathreplacing,patterns}

\newif\ifshowprop
\showpropfalse

\begin{tikzpicture}[x=1cm,y=1cm,>=Latex, font=\normalsize]

\def\lambdaLimit{\dfrac{k}{m}}

\definecolor{Hblue}{RGB}{40,120,220}
\definecolor{Ared}{RGB}{200,40,40}
\definecolor{Ppurple}{RGB}{140,60,230}

\def\W{11}     %
\def\H{7}      %
\def\xsplit{6} %

\def\notationY{\H*0.35}
\def\bottomLayerY{0.9}

\def\OuterBorderWidth{0.7pt}
\def\InnerBorderWidth{1.0pt}
\def\ArrowBorderWidth{0.5pt}

\def\RectAIacc{(0,0) rectangle (\xsplit,\H)}
\def\RectAInoacc{(\xsplit,0) rectangle (\W,\H)}
\def\EllipH{(\W/2, \H*0.75) ellipse [x radius=\W/2*0.9, y radius=1]}
\def\EllipA{(2, \H/2*0.9) ellipse [x radius=1, y radius=\H/2*0.8]}

\begin{scope}
    \clip \RectAIacc;
    \path[draw=Hblue, line width=\InnerBorderWidth,
            fill=Hblue!20] \EllipH;
\end{scope}

\node[Hblue, anchor=east] (Hcaplabel) at (\xsplit*0.95, \H*0.75) {$\cH \cap \cSacc$};

\begin{scope}
    \clip \RectAInoacc;
    \path[draw=Hblue, line width=\InnerBorderWidth, dashed]
        \EllipH;
\end{scope}

\node[Hblue] (Hlabel) at (\W*0.9, \H*0.88) {$\cH$};

\draw[Ared, line width=\InnerBorderWidth, fill=Ared!15] \EllipA;
\node[Ared] (Alabel) at (3, \bottomLayerY) {$\cA$};

\begin{scope}
    \clip \EllipH;
    \path[draw=none, line width=\InnerBorderWidth, 
            fill=Ppurple!35]
        \EllipA;
\end{scope}

\begin{scope}
    \clip \RectAIacc;
    \path[draw=Hblue, line width=\InnerBorderWidth,
            fill=none] \EllipH;
\end{scope}
\draw[Ared, line width=\InnerBorderWidth] \EllipA;

\node[anchor=east,
      draw, dashed, rounded corners=3pt, inner sep=5pt]
    (lamfrac) at (\xsplit*0.95, \notationY) 
    {\small $\dfrac{\color{Ppurple} |\cA\cap \cH|}{\color{Ared} |\cA|}=\lambda$};
\node (lamfrac num) at ([xshift=-0.3cm] lamfrac.north) {};
\node (lamfrac denom) at ([xshift=-0.3cm] lamfrac.south) {};

\path [->, Ppurple, line width=\ArrowBorderWidth]
    (2, \H*0.73) edge [bend left] (lamfrac num);

\path [->, Ared, line width=\ArrowBorderWidth]
    ([xshift=0.2cm] Alabel.east) edge [bend right] (lamfrac denom);

\ifshowprop
    \node[anchor=east,
          draw, dashed, rounded corners=3pt, inner sep=5pt] 
        (kmfrac) at ({\xsplit + (\W-\xsplit) * 0.9}, \notationY)
        {\small $\dfrac{\color{Hblue} |\cH \cap \cSacc|} {|\cSacc|}=\lambdaLimit$};
    \node (kmfrac num) at ([xshift=-0.4cm] kmfrac.north) {};
    \node (kmfrac denom) at ([xshift=-0.4cm] kmfrac.south) {};
    
    \path [->, Hblue, line width=\ArrowBorderWidth]
        (\xsplit, \H*0.75) edge [bend left] (kmfrac num);
        
    \path [->, black, line width=\ArrowBorderWidth]
        (\xsplit, \bottomLayerY) edge [bend right=15] (kmfrac denom);

    \path [->, black, line width=\ArrowBorderWidth]
        ([xshift=0.2cm] lamfrac.east) edge node [below] {\footnotesize $|\cS|\to+\infty$} ([xshift=-0.2cm] kmfrac.west);
\fi

\draw[line width=\OuterBorderWidth] \RectAIacc;
\node [above=1.8pt] at (\xsplit/2, \H) {Cues accessible to AI};

\draw[dashed, line width=\OuterBorderWidth] \RectAInoacc;
\node [above=1.8pt] at ({(\W+\xsplit)/2}, \H) {Cues inaccessible to AI};

\node[anchor=south east, inner sep=5pt] at (\xsplit, 0) {$\cSacc$};
\node[anchor=south east, inner sep=5pt] at (\W, 0) {$\cSnoacc$};

\end{tikzpicture}

\ifshowprop
    \caption{Conceptual model with cues split into two subsets (Section~\ref{sec:sampling_two_worlds}). The AI can only access primitive cues in $\cSacc$ and samples uniformly from this set.
    }
\else
    \caption{Conceptual model with cues split into two subsets (Section~\ref{sec:sampling_two_worlds}). The AI can only access primitive cues in $\cSacc$ and samples uniformly from this set.
    }
\fi
\label{fig:sampling_two_worlds}

\end{figure}

The ratio
\[
\frac{|\cH\cap \cS_{\text{ai-acc}}|}{|\cS_{\text{ai-acc}}|}=\frac{k}{m}
\]
is a \emph{domain-specific} overlap rate: it is the fraction of all AI-accessible cues that the human
already has in their information set.

\begin{proposition}[Overlap stabilization under random sampling]\label{prop:lambda_concentration}
Let $\cA$ be drawn uniformly among all $a|\cS|$-element subsets of $\cS_{\text{ai-acc}}$. Then the overlap
coefficient $\lambda=|\cA\cap\cH|/|\cA|$ satisfies
\[
\lambda \overset{p}{\rightarrow} \frac{k}{m} \quad \text{as}\quad |\cS|\rightarrow +\infty .
\]
where ``$\overset{p}{\rightarrow}$'' denotes convergence in probability.
\end{proposition}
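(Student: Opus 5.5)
The intersection $|\cA\cap\cH|$ is hypergeometric, and the plan is to bound its variance and then apply Chebyshev's inequality. Write $n:=|\cS|$. Only the AI-accessible part of the human's information set can meet $\cA$, because $\cA\subseteq\cSacc$. So
\[
|\cA\cap\cH| = |\cA\cap(\cH\cap\cSacc)|.
\]
The set $\cA$ is a uniformly random subset of size $an$ drawn from a population $\cSacc$ of size $mn$, of which $kn$ elements are "marked" (those in $\cH\cap\cSacc$). Hence $X:=|\cA\cap\cH|$ is hypergeometric with population $mn$, $kn$ successes and $an$ draws. Since $\lambda=X/(an)$, it suffices to show $X/(an)\toP k/m$. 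Throughout, $a>0$ is fixed and $(a,m,k)$ are fixed ratios with $an,mn,kn$ integers along the sequence; if needed, one can use floors and absorb the $O(1/n)$ rounding errors.

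The first step is to compute the mean by linearity. Each accessible cue lies in $\cA$ with probability $an/(mn)=a/m$, so
\[
\Eb[X] = kn\cdot \frac{a}{m},
\qquad\text{hence}\qquad
\Eb[\lambda]=\frac{k}{m}
\]
exactly.

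The second step is the standard hypergeometric variance. Writing $X$ as a sum of indicators $\mathbb 1\{s\in\cA\}$ over $s\in\cH\cap\cSacc$, these indicators are pairwise negatively correlated under sampling without replacement. Therefore
\[
\Var(X)\le kn\cdot\frac{a}{m}\Big(1-\frac{a}{m}\Big)\le \frac{akn}{m}.
\]
The exact formula carries the extra finite-population factor $(mn-an)/(mn-1)\le 1$, so the bound holds directly. It follows that
\[
\Var(\lambda)=\frac{\Var(X)}{a^2n^2}\le \frac{k}{a m n}\to 0.
\]

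The third step applies Chebyshev's inequality. For any $\epsilon>0$,
\[
\Pr\big(|\lambda-k/m|>\epsilon\big)\le \frac{k}{amn\,\epsilon^2}\to 0,
\]
which gives convergence in probability. The only point requiring care is the negative-correlation step, or equivalently citing the hypergeometric variance; everything else is routine. One could also obtain exponential concentration from Hoeffding's inequality for sampling without replacement, but that is not needed here.
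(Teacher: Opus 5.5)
Your proof is correct and is essentially the paper's argument. The appendix proves a heterogeneous-precision generalization using the without-replacement sampling variance $(1-n/M)S^2/n$ plus Chebyshev. In the homogeneous case its denominator is deterministic, so it reduces exactly to your hypergeometric-variance-plus-Chebyshev computation. One cosmetic slip: relative to your form $kn\cdot\frac{a}{m}(1-\frac{a}{m})$, the exact finite-population factor is $(mn-kn)/(mn-1)$, not $(mn-an)/(mn-1)$, but your negative-correlation bound is unaffected.
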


Proposition~\ref{prop:lambda_concentration} highlights that, in this granular-information regime, $\lambda$ is
approximately pinned down by the domain parameter $k/m$ and is thus \emph{invariant} to AI
capability growth (changes in $a$). Intuitively, as the AI draws more cues from the same accessible
pool, it acquires redundant and non-redundant information in roughly fixed proportions.

This decoupling allows us to compare different AI models across domains and generations:
\begin{itemize}
\item \textbf{Across domains:} As captured by $\lambda$, domain-specific overlap can vary substantially. Domains where humans and AI rely on similar underlying evidence (high $k/m$) will exhibit high overlap,
whereas domains where humans primarily bring distinct, hard-to-digitize information (low $k/m$) can exhibit low
overlap and greater potential for complementarity.
\item \textbf{Across AI model generations:} AI capability improvements mainly
raise $\tau_A$ by allowing the AI to aggregate more cues (larger $a$), while leaving $\lambda$ roughly
stable. 
\end{itemize}
Finally, note the limit of this invariance claim: $\lambda$ \emph{can} change if the AI-accessible pool
itself expands (e.g., new modalities, new sensors) or if human workflows
shift (human reallocates their attention, changing which AI-accessible cues enter human information set $\cH$). Our perspective here is that conditional on a
stable access regime, $\lambda$ is best viewed as a \emph{domain parameter}, whereas $\tau_A$ is the natural
index of AI capability growth. This motivates studying joint performance as a function of $(\tau_A,\lambda)$
in the next section, leading to our main insights.

\section{ Human--AI Interaction Analysis}

In this section, we study how the Human--AI interaction evolves across different domains ($\lambda$) and AI capabilities $\tau_A$. We characterize when different \emph{interaction regimes} occur such as \textit{augmentation}, \textit{automation} and \textit{complementarity}. 
We study these regimes in two different human decision-making models. First,
a Bayes-rational human, to isolate the pure information
channel. 
Second, a correlation-neglect human, which captures a natural form of behavioral misspecification. The contrast between these two benchmarks reveals when augmentation helps, when it harms, and when automation dominates.

Recall that Section~\ref{sec:conceptual} provides a micro-founded Gaussian instance for the joint distribution
of $(Y,H,A)$, summarized by three parameters:
\begin{align*}
\tau_H &= \Var(H\mid Y)^{-1} &&\text{(human signal precision)},\\
\tau_A &= \Var(A\mid Y)^{-1} &&\text{(AI signal precision)},\\
\lambda &=\frac{\Cov(H,A\mid Y)}{\Var(H\mid Y)} &&\text{(overlap coefficient)},
\end{align*}
satisfying the natural constraint implied by the micro-founded model:
\[
\lambda =\frac{|\cA\cap \cH|}{|\cA|}\leq \min \left\{\frac{|\cH|}{|\cA|},1\right\}=\min\left\{\frac{\tau_H}{\tau_A},1\right\}.
\]
These quantities fully determine the conditional joint distribution of $(H,A)\mid Y$.
By Lemma~\ref{lem:decomposition}, we can orthogonalize the AI signal as
\[
A=\lambda H+(1-\lambda)\tilde A,
\qquad
\tilde A\perp\!\!\!\perp H\mid Y,
\]
where the incremental precision is
\[
\tilde\tau
:=\Var(\tilde A\mid Y)^{-1}
=\frac{(1-\lambda)^2}{\tau_A^{-1}-\lambda^2\tau_H^{-1}}.
\]

\subsection{Perfect (Bayesian-rational) Human Decision-maker}\label{sec:Bayes-opt_Human}

In this benchmark, the decision-maker knows the true joint distribution of $(Y,H,A)$ and implements the Bayes-optimal decision rule. She uses the AI signal optimally and extracts the innovation signal $\tilde A$, accounting for redundancy. As a result, access to AI cannot worsen performance: \emph{complementarity is always achieved under Bayesian-optimal integration}.

Here, when the human acts alone, the loss is
\[
L_{\mathrm{H}} := \inf_{\delta: \mathcal{H} \to \mathcal{D}} \mathbb{E}\left[L(Y, \delta(H))\right]
= \Var(Y\mid H)=\frac{1}{\tau_0+\tau_H}.
\]
When the human uses AI, the joint loss becomes
\[
L_{\mathrm{H-AI}}
:= \inf_{\delta: \mathcal{H}\times \mathcal{A} \to \mathcal{D}} \mathbb{E}\left[L(Y, \delta(H,A))\right]
= \Var(Y\mid H,A)
= \Var(Y\mid H,\tilde A)
= \frac{1}{\tau_0+\tau_H+\tilde\tau}.
\]
Since there is no behavioral misspecification, the realized gain from AI access is exactly the marginal value of AI information:
\begin{align*}
v(A\mid H)
&=L_{\mathrm{H}}-\widehat L_{\mathrm{H-AI}}.
\end{align*}
The following proposition describes how the impact of AI on human decision-making varies with (i) AI standalone capability, (ii) overlap with human information, and (iii) human expertise.

\begin{proposition}[Comparative statics]
\label{prop:comparative_statics_bayes}
Consider feasible pairs of  $(\tau_A,\tau_H, \lambda)$ satisfying $0<\lambda<\min\left\{\frac{\tau_H}{\tau_A},1\right\}$.
Then:
{\renewcommand{\labelenumi}{(\roman{enumi})}%
		\begin{enumerate}[(i)]
  \item $v(A\mid H)$ is strictly increasing in $\tau_A$.
  \item $v(A\mid H)$ is strictly decreasing in $\lambda$.
  \item $v(A\mid H)$ is strictly decreasing in $\tau_H$.
\end{enumerate}
}
\end{proposition}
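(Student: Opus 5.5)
The approach is to work directly from the closed form obtained after Lemma~\ref{lem:decomposition},
\[
v(A\mid H)=\frac{1}{c}-\frac{1}{c+\tilde\tau},\qquad c:=\tau_0+\tau_H,\qquad \tilde\tau=\frac{(1-\lambda)^2}{\tau_A^{-1}-\lambda^2\tau_H^{-1}},
\]
and split each comparative static into two parts: how $\tilde\tau$ moves, and how $v$ depends on $(c,\tilde\tau)$.

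\textbf{Step 0: the denominator of $\tilde\tau$ is positive.} Since $0<\lambda<1$, we have $\lambda^2<\lambda<\tau_H/\tau_A$. Hence $\tau_A^{-1}-\lambda^2\tau_H^{-1}>0$, so $\tilde\tau\in(0,\infty)$.

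\textbf{Step 1: monotonicity of $v$ in $(c,\tilde\tau)$.} For fixed $c$, the map $\tilde\tau\mapsto v$ is strictly increasing. For fixed $\tilde\tau>0$, we have $\partial v/\partial c=-1/c^2+1/(c+\tilde\tau)^2<0$. The feasibility set is open, so small perturbations of any one parameter stay feasible, and derivatives can be taken freely.

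\textbf{Part (i): increasing in $\tau_A$.} Raising $\tau_A$ lowers $\tau_A^{-1}$. This shrinks the positive denominator of $\tilde\tau$ while leaving the numerator $(1-\lambda)^2>0$ unchanged, so $\tilde\tau$ strictly increases. Since $c$ does not depend on $\tau_A$, Step 1 gives that $v$ strictly increases.

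\textbf{Part (ii): decreasing in $\lambda$.} Write $a=\tau_A^{-1}$, $b=\tau_H^{-1}$ and $f(\lambda)=(1-\lambda)^2/(a-b\lambda^2)$. The quotient rule gives a numerator of
\[
-2(1-\lambda)(a-b\lambda^2)+2b\lambda(1-\lambda)^2=2(1-\lambda)(b\lambda-a).
\]
The constraint $\lambda<\tau_H/\tau_A$ is exactly $b\lambda<a$. Together with $1-\lambda>0$, this makes $f'(\lambda)<0$. Since $c$ does not involve $\lambda$, Step 1 gives that $v$ strictly decreases.

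\textbf{Part (iii): decreasing in $\tau_H$.} This is the only case where $\tau_H$ enters twice, through both $c$ and $\tilde\tau$, so I expect it to be the main point to get right. Both channels push the same way.
\begin{itemize}
\item Raising $\tau_H$ lowers $\lambda^2\tau_H^{-1}$, which raises the denominator of $\tilde\tau$, so $\tilde\tau$ strictly decreases.
\item Raising $\tau_H$ raises $c$.
\end{itemize}
By the chain rule,
\[
\frac{dv}{d\tau_H}=\frac{\partial v}{\partial c}+\frac{\partial v}{\partial\tilde\tau}\,\frac{\partial\tilde\tau}{\partial\tau_H},
\]
which is a negative term plus (positive)$\times$(negative). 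Hence $dv/d\tau_H<0$.

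The only care needed is to hold $\lambda$ fixed as a primitive while varying $\tau_H$. This is legitimate because the strict feasibility inequality is preserved under small increases of $\tau_H$.
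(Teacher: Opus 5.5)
Your proposal is correct and follows the paper's proof essentially step for step. Both write $v(A\mid H)=\frac{1}{\tau_0+\tau_H}-\frac{1}{\tau_0+\tau_H+\tilde\tau}$ and show $\tilde\tau$ is monotone in each parameter, using the same quotient-rule numerator $2(1-\lambda)(\lambda\tau_H^{-1}-\tau_A^{-1})<0$ for $\lambda$; both then combine this with $v$ increasing in $\tilde\tau$ and decreasing in $\tau_0+\tau_H$, and your explicit check that $\tau_A^{-1}-\lambda^2\tau_H^{-1}>0$ on the feasible set is a detail the paper leaves implicit.
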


Proposition~\ref{prop:comparative_statics_bayes} isolates the \emph{pure informational channel}.
It says that if humans could integrate AI optimally, then (i) improving AI capability always helps,
(ii) overlap is unambiguously harmful because it reduces the incremental precision $\tilde\tau$, and (iii) the more
expert the human (higher $\tau_H$), the less marginal benefit any AI signal provides, featuring
diminishing returns of information.

Notably, any empirical failures of complementarity must be driven by the second force in our framework: behavioral misspecification. In the next section, we show that once the human does not use AI optimally, more subtle interactions emerge in how these factors shape the performance of the AI-assisted human.

\subsection{Correlation-neglect Human Decision-maker}
\label{subsec:correlation_neglect}
We now study the Human-AI interaction under human behavioral misspecification.
We consider the instance of a human who exhibits \emph{correlation neglect}: she treats her signal $H$ and the AI signal $A$ as conditionally independent given $Y$, even though they are not.
This is a natural stylized model of ``na\"ive evidence aggregation,'' in which agreement between sources
is mistakenly interpreted as independent corroboration. Empirical evidence of such behaviors includes \cite{eyster2010correlation,kallir2009neglect,enke2019correlation,Agarwal2023NBERradiology}.

Formally, we assume the \emph{subjective} model of the correlation-neglect human is
\begin{align*}
Y &\sim \cN(\mu_0,\tau_0^{-1}),\\
H\mid Y &\sim \cN\left(Y,\tau_H^{-1}\right),\\
A\mid Y &\sim \cN\left(Y,\tau_A^{-1}\right),\\
H&\perp\!\!\!\perp A\mid Y,
\end{align*}
so that the human correctly understands the prior distribution of state $Y$, the marginal distributions $A\mid Y$ and $H\mid Y$, but incorrectly believes that\footnote{Similar modeling choice of correlation neglect appears in \cite{ortoleva2015overconfidence,levy2015correlation,levy2022persuasion}.} $H\perp\!\!\!\perp A\mid Y$.  Upon receiving signals $H$ and $A$, the (subjective) posterior belief of a human decision maker with correlation neglect is
\[
Y\mid H,A \ \sim \ \cN\left(
\frac{\tau_0\mu_0+\tau_H H+\tau_A A}{\tau_0+\tau_H+\tau_A},\ \frac{1}{\tau_0+\tau_H+\tau_A}
\right).
\]

The correlation-neglect human's decision rule is
\[{\delta}^{\mathrm{CN}}(H,A) = \arg\min_{d \in \mathcal{D}} \mathbb{E}^{\text{CN}}\big[ L(Y, d) \mid H,A\big],
\]
where $\mathbb{E}^{\mathrm{CN}}[\cdot]$ denotes expectation computed under the subjective model.
Under quadratic loss, the decision rule is simply the posterior mean under the subjective model
\begin{equation}
\delta^{\mathrm{CN}}(H,A)
= \mathbb{E}^{\mathrm{CN}}[Y\mid H,A]
= \frac{\tau_0\mu_0+\tau_H H+\tau_A A}{\tau_0+\tau_H+\tau_A}.
\label{eq:cn_rule}
\end{equation}

Substituting $A=\lambda H+(1-\lambda)\tilde A$ into~\eqref{eq:cn_rule} yields
\begin{equation*}
\delta^{\mathrm{CN}}(H,A)
= \frac{\tau_0}{T}\mu_0
+ \frac{\tau_H+\lambda\tau_A}{T}\,H
+ \frac{(1-\lambda)\tau_A}{T}\,\tilde A,
\qquad
T:=\tau_0+\tau_H+\tau_A.
\end{equation*}
The resulting expected loss of the AI assisted human admits a closed form. First,
\begin{align*}
\mathbb{E}\left[L \left( Y, {\delta}^{\mathrm{CN}}(H,A) \right)\right]
&=
\left(\frac{\tau_0}{T}\right)^2\frac1{\tau_0}
+
\left(\frac{\tau_H+\lambda\tau_A}{T}\right)^2\frac1{\tau_H}
+
\left(\frac{(1-\lambda)\tau_A}{T}\right)^2\frac1{\tilde\tau}.
\end{align*}
Plugging in the expression for $\tilde\tau$ simplifies this to
\begin{equation}
\widehat L_{\mathrm{H-AI}}
:=
\frac{1}{T}
+\frac{2\lambda\tau_A}{T^2},
\qquad T:=\tau_0+\tau_H+\tau_A.
\label{eq:cn_loss_closed_form}
\end{equation}

The decomposition in~\eqref{eq:cn_loss_closed_form} is interpretable:
\begin{itemize}
\item The term $1/T$ is the loss the decision-maker \emph{would} achieve if independence were true
(i.e., if she were correctly aggregating three independent Gaussian signals with
precisions $\tau_0,\tau_H,\tau_A$).
\item The additional term $2\lambda\tau_A/T^2$ is an \emph{overlap penalty}. It captures two forces: (i) reduced marginal value of AI information due to overlap, and (ii) double-counting of overlapping evidence by the correlation-neglect human. When $\lambda=0$, both forces vanish: the human's independence assumption is correct and correlation neglect is costless.
\end{itemize}

Our goal in what follow is to characterize the human-AI interaction across different domains ($\lambda$) and AI capabilities ($\tau_A$). Different regimes of interaction will arise from comparing the AI-assisted human loss, with the human only and AI only losses:

\begin{equation}
L_{\mathrm{H}}=\frac{1}{\tau_0+\tau_H}
\quad\text{(human-only)},\qquad
L_{\mathrm{AI}}=\frac{1}{\tau_0+\tau_A}
\quad\text{(AI-only)}.
\label{eq:baselines_cn}
\end{equation}

Key questions we address next are: when does \textit{augmentation} arise? when does \textit{automation} dominate? When is \textit{complementarity} achievable?

\subsubsection{Augmentation}
\begin{center}
	\textit{When does AI assistance improve human decision-making?}
\end{center}

Formally, we ask for which domains (overlap $\lambda$) and AI capability levels ($\tau_A$) the joint AI-assisted human loss $\widehat L_{\mathrm{H-AI}}$ is lower than the human-only loss $L_{\mathrm{H}}$. We refer to this regime as \emph{augmentation}.

A first-order expansion of the Human-AI loss around
$\tau_A=0$ yields
\[
\widehat L_{\mathrm{H-AI}}
=
L_{\mathrm{H}}
+\frac{(2\lambda-1)\tau_A}{(\tau_0+\tau_H)^2}
+o(\tau_A).
\]
The expansion shows that the initial effect of introducing a weak AI is beneficial when $\lambda<\tfrac12$ and negative when
$\lambda>\tfrac12$. This hints that the overlap cutoff $\lambda=\tfrac12$ plays a central role in determining whether introducing AI into the human workflow is beneficial.

The next proposition provides an explicit characterization as to when AI assistance benefits the human and when it deteriorates her performance.

\begin{proposition}[Augmentation]\label{prop:withhold_threshold}
Define the augmentation threshold
\begin{equation*}
\tau_{\rm aug}(\lambda)
=
(\tau_0+\tau_H)(2\lambda-1).
\label{eq:tau_H_to_CN}
\end{equation*}
\begin{enumerate}[(i)]
    \item If $\lambda< \tfrac12$, then $ \widehat L_{\mathrm{H-AI}}< L_{\mathrm{H}}$ for all $\tau_A>0$.
    \item If $\lambda>\tfrac12$, then $\widehat L_{\mathrm{H-AI}}< L_{\mathrm{H}}$ if and only if $\tau_A> \tau_{\rm aug}(\lambda)$.
\end{enumerate}
\end{proposition}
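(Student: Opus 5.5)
We compare the closed form \eqref{eq:cn_loss_closed_form} with the human-only baseline in \eqref{eq:baselines_cn}, and reduce the question to the sign of an explicit expression. Write $S:=\tau_0+\tau_H>0$, so that $T=S+\tau_A$ and $L_{\mathrm H}=1/S$. Then
\[
L_{\mathrm H}-\widehat L_{\mathrm{H-AI}}=\frac1S-\frac1T-\frac{2\lambda\tau_A}{T^2}
=\frac{\tau_A}{ST}-\frac{2\lambda\tau_A}{T^2}
=\frac{\tau_A}{ST^2}\bigl(T-2\lambda S\bigr).
\]
Since $\tau_A>0$, $S>0$ and $T>0$, the prefactor $\tau_A/(ST^2)$ is strictly positive. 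Hence $\widehat L_{\mathrm{H-AI}}<L_{\mathrm H}$ holds if and only if $T-2\lambda S>0$. Substituting $T=S+\tau_A$, this condition is
\[
\tau_A>(2\lambda-1)S=\tau_{\rm aug}(\lambda).
\]

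\emph{Case (i), $\lambda<\tfrac12$.} Here $\tau_{\rm aug}(\lambda)<0<\tau_A$, so the condition holds for every $\tau_A>0$, which gives (i).

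\emph{Case (ii), $\lambda>\tfrac12$.} Here $\tau_{\rm aug}(\lambda)>0$, and the equivalence above is exactly the claimed ``if and only if'' statement.

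The only step that needs care is trusting the closed form \eqref{eq:cn_loss_closed_form} as stated in the paper. If we wanted to re-derive it, we would expand the three-term variance expression using
\[
\tilde\tau^{-1}=\frac{\tau_A^{-1}-\lambda^2\tau_H^{-1}}{(1-\lambda)^2}.
\]
The cross terms in $\lambda$ collect into
\[
\frac{\tau_0+\tau_H+\tau_A+2\lambda\tau_A}{T^2},
\]
which is $\tfrac1T+\tfrac{2\lambda\tau_A}{T^2}$. Beyond this check, the argument is a routine sign analysis and presents no real obstacle.

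The feasibility constraint $\lambda\le\min\{\tau_H/\tau_A,1\}$ plays no role in the algebra above. It only restricts which pairs $(\tau_A,\lambda)$ are realizable, so the characterization holds on the feasible set as stated.
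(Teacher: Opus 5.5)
Your proof is correct and follows the paper's own argument. Both compute $L_{\mathrm H}-\widehat L_{\mathrm{H-AI}}$ from the closed form \eqref{eq:cn_loss_closed_form}, factor it as $\tau_A\bigl(\tau_A-(2\lambda-1)(\tau_0+\tau_H)\bigr)/\bigl(T^2(\tau_0+\tau_H)\bigr)$, and read off the sign; you also keep the strict inequality throughout, which is slightly tidier than the appendix's use of $\leq$ in case (ii).
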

Proposition~\ref{prop:withhold_threshold} shows that when $\lambda<\tfrac12$, AI always improves human decision-making, no matter AI capabilities: augmentation is always achieved. Intuitively, low overlap ($\lambda < \frac{1}{2}$) means the AI contributes more novel than redundant information as $\lambda = \frac{|\mathcal{A} \cap \mathcal{H}|}{|\mathcal{A}|}$ from the micro founded model. 

Its informational benefit dominates the behavioral penalty from correlation neglect.

However, when overlap is large ($\lambda>\tfrac12$), Proposition~\ref{prop:withhold_threshold} shows that
introducing an AI with low capabilities deteriorates performance of the human decision-maker.
A weak AI adds little novel information and the human missuse of it dominates the information benefit.
The threshold $\tau_{\rm aug}(\lambda)$ is increasing in $\lambda$: more overlap requires a stronger AI
before it is beneficial to use the signal to assist a correlation-neglect decision-maker.

Our model offers a prediction for early-stage AI deployment: when AI capability is limited and overlap with human expertise is high, introducing AI assistance can degrade performance relative to human-only decision-making. Augmentation becomes beneficial only after AI capability surpasses the threshold $\tau_{\mathrm {aug}}(\tau)$, which is higher in domains with greater informational overlap.

\subsubsection{Automation}

\begin{center}
		\textit{When is it optimal to rely on the AI alone rather than AI-assisted human or human alone?}
\end{center}
The following proposition characterizes the automation regime.
\begin{proposition}[Automation]
		\label{prop:automation}
		Define the automation threshold 
			\begin{equation*}
			\tau_{\mathrm{auto}}(\lambda)
		=
		\frac{\tau_H-2\lambda\tau_0+\sqrt{(\tau_H-2\lambda\tau_0)^2+8\lambda\tau_H(\tau_0+\tau_H)}}{4\lambda}
		>0.
		\label{eq:tau_CN_to_AI}
			\end{equation*}
		Then $L_{\mathrm{AI}} < \widehat L_{\mathrm{H-AI}} $ if and only if $\tau_A > \tau_{\mathrm{auto}}(\lambda)$. Here, $\tau_{\mathrm{auto}}(\lambda)$ is strictly decreasing in $\lambda$.\\
		Furthermore, define
			\[
			\bar\lambda
		=
		\frac12+\frac{\tau_H}{2(\tau_0+\tau_H)}
		\in\Big(\tfrac12,1\Big).
		\]
		\begin{enumerate}[(i)]
		\item If $\lambda < \bar\lambda$, then $L_{\mathrm{AI}} < \min\{L_{\mathrm{H}}, \widehat L_{\mathrm{H-AI}}\}$ if and only if $\tau_A > \tau_{\mathrm{auto}}(\lambda)$. \\
		In this case, we also have $\tau_{\mathrm{auto}}(\lambda) > \tau_H$.
		\item If $\lambda > \bar\lambda$, then $L_{\mathrm{AI}} < \min\{L_{\mathrm{H}}, \widehat L_{\mathrm{H-AI}}\}$ if and only if $\tau_A > \tau_H$.\\In particular, there is no range of AI capabilities for which the joint system outperforms both baselines: the complementarity region is empty.
	\end{enumerate}
	\end{proposition}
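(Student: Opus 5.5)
The whole statement reduces to the sign of one quadratic in $x:=\tau_A$. I will substitute the closed forms \eqref{eq:cn_loss_closed_form} and \eqref{eq:baselines_cn}, writing $u:=\tau_0+x$ so that $T=u+\tau_H$. Then $L_{\mathrm{AI}}<\widehat L_{\mathrm{H-AI}}$ reads $T^2<u\,(T+2\lambda x)$ after multiplying through by the positive quantity $uT^2$. Expanding, the $u^2$ terms cancel, and the condition becomes $u\tau_H+\tau_H^2<2\lambda x u$. This is equivalent to
\[
q(x,\lambda):=2\lambda x^2+(2\lambda\tau_0-\tau_H)\,x-\tau_H(\tau_0+\tau_H)>0 .
\]
Since $\lambda>0$ and the constant term is strictly negative, $q(\cdot,\lambda)$ has exactly one positive root. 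That root is precisely $\tau_{\mathrm{auto}}(\lambda)$ as displayed, via the quadratic formula. Hence $L_{\mathrm{AI}}<\widehat L_{\mathrm{H-AI}}$ if and only if $\tau_A>\tau_{\mathrm{auto}}(\lambda)$, and in particular $\tau_{\mathrm{auto}}(\lambda)>0$.

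For monotonicity in $\lambda$, I will avoid differentiating the explicit square-root formula and use implicit differentiation instead. First, $\partial_\lambda q=2x(\tau_0+x)>0$ for $x>0$. Second, $\partial_x q>0$ at the positive root, because $q$ is an upward parabola that crosses from negative to positive there. Therefore $d\tau_{\mathrm{auto}}/d\lambda=-\partial_\lambda q/\partial_x q<0$.

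For parts (i)–(ii), I will first note that $L_{\mathrm{AI}}<L_{\mathrm{H}}$ if and only if $\tau_A>\tau_H$, directly from \eqref{eq:baselines_cn}. Thus $L_{\mathrm{AI}}<\min\{L_{\mathrm{H}},\widehat L_{\mathrm{H-AI}}\}$ if and only if $\tau_A>\max\{\tau_H,\tau_{\mathrm{auto}}(\lambda)\}$. To compare $\tau_H$ with $\tau_{\mathrm{auto}}(\lambda)$, I will evaluate $q$ at $x=\tau_H$:
\[
q(\tau_H,\lambda)=\tau_H\bigl[2\lambda(\tau_0+\tau_H)-(\tau_0+2\tau_H)\bigr].
\]
This is negative exactly when $\lambda<\bar\lambda=\frac{\tau_0+2\tau_H}{2(\tau_0+\tau_H)}$, which is the stated $\bar\lambda$ and clearly lies in $(\tfrac12,1)$. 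Because $q(\cdot,\lambda)$ is negative on $(0,\tau_{\mathrm{auto}})$ and positive beyond it, $q(\tau_H,\lambda)<0$ means $\tau_{\mathrm{auto}}(\lambda)>\tau_H$. This gives (i), where the maximum is $\tau_{\mathrm{auto}}(\lambda)$. When $\lambda>\bar\lambda$, we get $\tau_{\mathrm{auto}}(\lambda)<\tau_H$, so the maximum is $\tau_H$, which gives the first claim of (ii).

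The step needing the most care is the emptiness of the complementarity region in (ii). I would argue it by cases on $\tau_A$.
\begin{itemize}
\item If $\tau_A>\tau_H$, then $\tau_A>\tau_{\mathrm{auto}}(\lambda)$, so the AI alone beats the joint system.
\item If $\tau_A\le\tau_H$, then $L_{\mathrm{H}}\le L_{\mathrm{AI}}$, so complementarity would require $\widehat L_{\mathrm{H-AI}}<L_{\mathrm{H}}$. By Proposition~\ref{prop:withhold_threshold}(ii) (applicable since $\bar\lambda>\tfrac12$), this requires $\tau_A>\tau_{\rm aug}(\lambda)=(\tau_0+\tau_H)(2\lambda-1)$.
\end{itemize}
A one-line check shows $\tau_{\rm aug}(\lambda)\ge\tau_H$ if and only if $\lambda\ge\bar\lambda$. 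So for $\lambda>\bar\lambda$ we have $\tau_{\rm aug}(\lambda)>\tau_H\ge\tau_A$, the condition fails, and the human alone is weakly best. The only real obstacle here is keeping strict versus weak inequalities straight at the boundary $\tau_A=\tau_H$; the rest is algebra.
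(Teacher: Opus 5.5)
Your proof is correct and follows the paper's route. Like the paper, you reduce $L_{\mathrm{AI}}<\widehat L_{\mathrm{H-AI}}$ to the quadratic $2\lambda\tau_A(\tau_0+\tau_A)-\tau_H(\tau_0+\tau_H+\tau_A)>0$, take its unique positive root, and combine this with $L_{\mathrm{AI}}<L_{\mathrm{H}}\iff\tau_A>\tau_H$ through $\max\{\tau_H,\tau_{\mathrm{auto}}(\lambda)\}$. Your sign check of $q(\tau_H,\lambda)$, the implicit-differentiation argument for monotonicity, and the case analysis showing the complementarity region is empty fill in details that the paper dismisses as ``follows directly.''
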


	Proposition~\ref{prop:automation} shows that \textit{automation is inevitable} with increasing AI capabilities: under human behavioral misspecification (correlation neglect here), there exists a threshold on AI capabilities beyond which relying on AI alone outperforms both human-only and AI-assisted human decision-making.
    
	Notably, under Bayesian-optimal integration (Section~\ref{sec:Bayes-opt_Human}), automation is never optimal---the human always adds value by correctly extracting the innovation signal. In our setting, the automation regime arises entirely from behavioral misspecification. This highlights a key implication: reducing correlation neglect (whether through training, interface design, or information structures that separate novel from redundant AI information) would raise the automation threshold, expanding the range of AI capabilities over which augmentation remains beneficial.

	When overlap is moderate (case (i) in Proposition~\ref{prop:automation}), the automation threshold ($\tau_{\mathrm{auto}}(\lambda)$) is above human capabilities ($\tau_H$). That is, automation requires more than the AI merely matching human capabilities. This is because the AI-assisted human still provides better performance than the human alone.
		
	When overlap is severe (case (ii) in Proposition~\ref{prop:automation}), automation becomes optimal as soon as AI capability exceeds human capability. Intuitively, severe overlap means that human and AI information is largely redundant, so AI assistance adds little new information while behavioral misspecification in human use of AI can dominate and reduce overall performance.

\subsubsection{Complementarity} 
\begin{center}
	\textit{When does the AI-assisted human improve upon both the AI alone and the human alone?}
\end{center}
The following proposition characterizes when complementarity arises.
\begin{proposition}[Complementarity]\label{prop:complementarity}
		Complementarity (i.e., $\widehat L_{\mathrm{H-AI}} < \min\{L_{\mathrm{AI}}, L_{\mathrm{H}}\}$) is possible if and only if
			\[
			\tau_{\rm aug}(\lambda)<\tau_{\mathrm{auto}}(\lambda),
		\]
	which is equivalent to the overlap condition
		\[
		\lambda<\bar\lambda.
		\]
	\begin{enumerate}[(i)]
		\item If $\lambda < \tfrac12$, then $\widehat L_{\mathrm{H-AI}} < \min\{L_{\mathrm{AI}}, L_{\mathrm{H}}\}$ if and only if $\tau_A \in (0, \tau_{\mathrm{auto}}(\lambda))$.
			\item If $\tfrac12< \lambda < \bar{\lambda}$, then $\widehat L_{\mathrm{H-AI}} < \min\{L_{\mathrm{AI}}, L_{\mathrm{H}}\}$ if and only if $\tau_A \in (\tau_{\rm aug}(\lambda), \tau_{\mathrm{auto}}(\lambda))$.
	\end{enumerate}
\end{proposition}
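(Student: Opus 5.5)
The plan is to build this result directly from Proposition~\ref{prop:withhold_threshold} and Proposition~\ref{prop:automation}, treating the complementarity set as the intersection of two sets of $\tau_A>0$. The first is the augmentation set $\{\tau_A:\widehat L_{\mathrm{H-AI}}<L_{\mathrm{H}}\}$. The second is $\{\tau_A:\widehat L_{\mathrm{H-AI}}<L_{\mathrm{AI}}\}$.

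The first set is given by Proposition~\ref{prop:withhold_threshold}. It equals $(0,\infty)$ when $\lambda<\tfrac12$ and $(\tau_{\rm aug}(\lambda),\infty)$ when $\lambda>\tfrac12$. Note that $\tau_{\rm aug}(\lambda)\le 0$ exactly when $\lambda\le\tfrac12$, so in both cases the set can be written as $(\max\{0,\tau_{\rm aug}(\lambda)\},\infty)$.

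For the second set, Proposition~\ref{prop:automation} gives $L_{\mathrm{AI}}<\widehat L_{\mathrm{H-AI}}$ if and only if $\tau_A>\tau_{\mathrm{auto}}(\lambda)$. I also need that strict reverse inequality holds for $\tau_A<\tau_{\mathrm{auto}}(\lambda)$, with equality only at the threshold. This follows from the proof of that proposition. The comparison $\frac{1}{\tau_0+\tau_A}$ versus $\frac1T+\frac{2\lambda\tau_A}{T^2}$ reduces, after multiplying by $T^2(\tau_0+\tau_A)$, to the sign of a quadratic in $\tau_A$:
\[
2\lambda\tau_A^2+(2\lambda\tau_0-\tau_H)\tau_A-\tau_H(\tau_0+\tau_H),
\]
whose unique positive root is $\tau_{\mathrm{auto}}(\lambda)$. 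So the second set is $(0,\tau_{\mathrm{auto}}(\lambda))$. The complementarity set is therefore the interval $(\max\{0,\tau_{\rm aug}\},\tau_{\mathrm{auto}})$. It is nonempty if and only if $\tau_{\rm aug}(\lambda)<\tau_{\mathrm{auto}}(\lambda)$, using $\tau_{\mathrm{auto}}>0$. This gives items (i) and (ii) immediately.

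The main step is showing that $\tau_{\rm aug}<\tau_{\mathrm{auto}}$ is equivalent to $\lambda<\bar\lambda$. For $\lambda\le\tfrac12$ the inequality is automatic, since $\tau_{\rm aug}\le0<\tau_{\mathrm{auto}}$, and $\bar\lambda>\tfrac12$. For $\lambda>\tfrac12$ I would use the quadratic sign characterization: since $\tau_{\rm aug}>0$, we have $\tau_{\rm aug}<\tau_{\mathrm{auto}}$ if and only if the quadratic is negative at $\tau_{\rm aug}$. Instead of evaluating that directly, a cleaner route is to use a single-crossing argument at $\tau_H$. By construction $L_{\mathrm{H}}=L_{\mathrm{AI}}$ at $\tau_A=\tau_H$, so all three losses compare consistently there. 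Hence $\tau_{\rm aug}<\tau_{\mathrm{auto}}$ holds if and only if $\widehat L_{\mathrm{H-AI}}<L_{\mathrm H}$ at $\tau_A=\tau_H$, which is equivalent to $\tau_H>\tau_{\rm aug}(\lambda)$.

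To justify this, I would make the two threshold orderings explicit. Suppose $\tau_{\rm aug}<\tau_H$. Then at $\tau_A=\tau_H$ the joint loss beats $L_{\mathrm H}=L_{\mathrm{AI}}$, so $\tau_H<\tau_{\mathrm{auto}}$, and hence $\tau_{\rm aug}<\tau_H<\tau_{\mathrm{auto}}$. Suppose instead $\tau_{\rm aug}>\tau_H$. Then the joint loss exceeds $L_{\mathrm{AI}}$ at $\tau_H$, so $\tau_{\mathrm{auto}}<\tau_H<\tau_{\rm aug}$. Finally, $\tau_H>(\tau_0+\tau_H)(2\lambda-1)$ rearranges to $\lambda<\frac12+\frac{\tau_H}{2(\tau_0+\tau_H)}=\bar\lambda$. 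This matches Proposition~\ref{prop:automation}(ii), which states that complementarity is empty when $\lambda>\bar\lambda$.

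The boundary case $\lambda=\bar\lambda$ needs a brief check. There the two thresholds coincide at $\tau_H$, so the interval is empty, consistent with the strict inequality in the statement. The feasibility constraint $\lambda\le\min\{\tau_H/\tau_A,1\}$ may truncate the interval, but it does not affect the characterization in terms of the loss formulas.
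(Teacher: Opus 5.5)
Your proof is correct. Its skeleton matches the paper's: complementarity is the intersection of $\{\tau_A>\tau_{\rm aug}(\lambda)\}$ and $\{\tau_A<\tau_{\mathrm{auto}}(\lambda)\}$. That set is nonempty if and only if $\tau_{\rm aug}(\lambda)<\tau_{\mathrm{auto}}(\lambda)$, and items (i)--(ii) follow.

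You depart from the paper in the key equivalence $\tau_{\rm aug}(\lambda)<\tau_{\mathrm{auto}}(\lambda)\iff\lambda<\bar\lambda$. The paper only says ``solving for the range of $\lambda$,'' which implicitly means working with the explicit square-root formula for $\tau_{\mathrm{auto}}(\lambda)$. Its automation proof separately ``solves'' $\tau_{\mathrm{auto}}(\lambda)\ge\tau_H\iff\lambda\le\bar\lambda$.

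You instead pivot at $\tau_A=\tau_H$, where $L_{\mathrm H}=L_{\mathrm{AI}}$. There, the sign of $\widehat L_{\mathrm{H-AI}}-L_{\mathrm H}$ determines which side of $\tau_H$ both thresholds lie on. This buys you three things:
\begin{enumerate}
\item It avoids the radical entirely.
\item It gives the full ordering: either $\tau_{\rm aug}<\tau_H<\tau_{\mathrm{auto}}$ or $\tau_{\mathrm{auto}}<\tau_H<\tau_{\rm aug}$, with all three coinciding at $\lambda=\bar\lambda$.
\item It explains why the same cutoff $\bar\lambda$ governs both the automation result ($\tau_{\mathrm{auto}}>\tau_H$ exactly when $\lambda<\bar\lambda$) and this one. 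The paper's two separate ``solving'' steps leave that connection implicit.
\end{enumerate}

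The paper's route is direct in principle, but as written it leaves the algebra to the reader. Yours is checkable in a few lines. Your handling of the boundary case $\lambda=\bar\lambda$ and your remark on feasibility truncation also add care the paper omits.
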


Proposition~\ref{prop:complementarity} shows that
complementarity is possible only when, as AI capability grows, augmentation precedes automation; i.e., the AI becomes strong enough to help ($\tau_A>\tau_{\rm aug}(\lambda)$) before it becomes strong enough to replace ($\tau_A>\tau_{\mathrm{auto}}(\lambda)$).
This is possible when the domain overlap is not too high ($\lambda<\bar\lambda$) so that the behavioral penalty from correlation neglect does not overwhelm the informational gain from AI assistance.

When overlap is low ($\lambda< \tfrac12$), complementarity is achieved for all AI capabilities until the automation threshold. 
When overlap is moderate ($\tfrac12<\lambda<\bar\lambda$), complementarity is achieved in an interval, between the augmentation and automation thresholds $\tau_A\in[\tau_{\rm aug}(\lambda),\tau_{\mathrm{auto}}(\lambda)]$.
Outside of this range, weak AI assistance penalizes the human as its misuse dominates its limited benefits, while very strong AI makes automation optimal as its misuse make it preferable to bypass the human.
As $\lambda$ increases, $\tau_{\mathrm{auto}}(\lambda)$ decreases while $\tau_{\rm aug}(\lambda)$ increases, narrowing the range of AI capabilities for which complementarity is achievable.

\subsubsection{Phase Transitions}
We are now able to give a complete characterization of the interaction regimes and a how the human-AI interaction evolves across different domains with increasing AI capabilities. We obtain three qualitatively different patterns, indexed by overlap.
Figure~\ref{fig:loss-three-panels} shows how expected losses varies with AI capability $\tau_A$ for each overlap region discussed below.

\def\tauzero{1}
\def\tauH{1}

\def\lambdalow{0.45}
\def\lambdamid{0.67}
\def\lambdahigh{0.85}

\def\xmaxLambdaLow{2.4}
\def\xmaxLambdaMid{1.7}
\def\xmaxLambdaHigh{1.7}

\def\yminLambdaLow{0.24}
\def\yminLambdaMid{0.33}
\def\yminLambdaHigh{0.33}

\def\ymaxLambdaLow{0.57}
\def\ymaxLambdaMid{0.57}
\def\ymaxLambdaHigh{0.58}

\newif\ifshowlegend
\showlegendfalse

\def\lossLabelLocation{right}  %

\def\optimalActionIndicator{bold}  %

\pgfplotsset{
    jointloss/.style={
        line width=\MediumCurveWidth,
        \jointcolor
    },
    humanloss/.style={
        line width=\MediumCurveWidth,
        \humancolor,
    },
    ailoss/.style={
        line width=\MediumCurveWidth,
        \aicolor
    },
}

\tikzset{
    xaxis regime labels/.style={
        anchor=south, 
        font=\scriptsize\bfseries,  %
        yshift=2.5pt
    },
    loss curve labels/.style={
        anchor=north, 
        font=\scriptsize,  %
        inner sep=4pt
    }
}

\def\HumanLossFunc{1/(\tauzero+\tauH)}
\def\AILossFunc{1/(\tauzero + x)}
\def\JointLossFunc{%
    1/(\tauzero+\tauH+x)%
    + 2*\lambdaval*x/((\tauzero+\tauH+x)^2)%
}

\usepgfplotslibrary{fillbetween}

\newif\ifLP@showCurveLabels
\newif\ifLP@showLegend

\pgfkeys{
  /lossplot/.is family, /lossplot,
  default/.style={
    lambdaval=\lambdamid,
    width=10cm,
    height=7cm,
    xmin=0,
    xmax=1.7,
    ymin=0.33,
    ymax=0.57,
    showCurveLabels=true,
    showLegend=false,
    regime=intermediate, %
    lossLabelLocation=\lossLabelLocation,
    optimalIndicator=\optimalActionIndicator,
  },
  lambdaval/.store in=\LP@lambdaval,
  width/.store in=\LP@width,
  height/.store in=\LP@height,
  xmin/.store in=\LP@xmin,
  xmax/.store in=\LP@xmax,
  ymin/.store in=\LP@ymin,
  ymax/.store in=\LP@ymax,
  showCurveLabels/.is if=LP@showCurveLabels,
  showLegend/.is if=LP@showLegend,
  regime/.store in=\LP@regime,
  lossLabelLocation/.store in=\LP@lossLabelLocation,
  optimalIndicator/.store in=\LP@optimalIndicator,
}

\def\LP@reg@intermediate{intermediate}
\def\LP@reg@low{low}
\def\LP@reg@high{high}
\def\LP@lossLabelLoc@on{on}%
\def\LP@lossLabelLoc@right{right}%
\def\LP@opt@shadeLower{shade lower}
\def\LP@opt@shadeVert{shade vert}
\def\LP@opt@bold{bold}

\newcommand{\DrawLossPanel}[1]{%
  \pgfkeys{/lossplot,default,#1}%
  \def\lambdaval{\LP@lambdaval}%
  \def\xmin{\LP@xmin}%
  \def\xmax{\LP@xmax}%
  \def\ymin{\LP@ymin}%
  \def\ymax{\LP@ymax}%
  \pgfmathsetmacro{\c}{\tauzero+\tauH}%
  \pgfmathsetmacro{\tauHtoHA}{\c*(2*\lambdaval - 1)}%
  \pgfmathsetmacro{\tauHAtoA}{%
    (\tauH - 2*\lambdaval*\tauzero + sqrt(8*\tauH*\tauH*\lambdaval + \tauH*\tauH + 4*\tauH*\lambdaval*\tauzero + 4*\lambdaval*\lambdaval*\tauzero*\tauzero))
    /(4*\lambdaval)%
  }%
  \pgfmathsetmacro{\tauHtoA}{\tauH}%
  \pgfmathsetmacro{\yH}{\HumanLossFunc}%
  \pgfmathsetmacro{\yAatTauHAtoA}{1/(\tauzero+\tauHAtoA)}%
  \edef\LP@regime@tmp{\LP@regime}%
  \ifx\LP@regime@tmp\LP@reg@intermediate
    \pgfmathsetmacro{\xWithhold}{(\xmin + \tauHtoHA)/2}%
    \pgfmathsetmacro{\xAugment}{(\tauHtoHA + \tauHAtoA)/2}%
    \pgfmathsetmacro{\xAutomate}{(\tauHAtoA + \xmax)/2}%
  \fi
  \ifx\LP@regime@tmp\LP@reg@low
    \pgfmathsetmacro{\xWithhold}{0}%
    \pgfmathsetmacro{\xAugment}{(\xmin + \tauHAtoA)/2}%
    \pgfmathsetmacro{\xAutomate}{(\tauHAtoA + \xmax)/2}%
  \fi
  \ifx\LP@regime@tmp\LP@reg@high
    \pgfmathsetmacro{\xWithhold}{(\xmin + \tauHtoA)/2}%
    \pgfmathsetmacro{\xAugment}{(\tauHtoHA + \tauHAtoA)/2}%
    \pgfmathsetmacro{\xAutomate}{(\tauHtoA + \xmax)/2}%
  \fi
  \pgfmathsetmacro{\yHumanMid}{\HumanLossFunc}%
  \pgfmathsetmacro{\yAIMid}{1/(\tauzero + \xAutomate)}%
  \pgfmathsetmacro{\yJointMid}{%
    1/(\tauzero+\tauH+\xAugment)%
    + 2*\lambdaval*\xAugment/((\tauzero+\tauH+\xAugment)^2)%
  }%
  \pgfmathsetmacro{\xLossLabelRight}{\xmax - 0.03*(\xmax-\xmin)}%
  \pgfmathsetmacro{\yHumanRight}{\HumanLossFunc}%
  \pgfmathsetmacro{\yAIRight}{1/(\tauzero + \xLossLabelRight)}%
  \pgfmathsetmacro{\yJointRight}{%
      1/(\tauzero+\tauH+\xLossLabelRight)%
      + 2*\lambdaval*\xLossLabelRight/((\tauzero+\tauH+\xLossLabelRight)^2)%
  }%
  \def\LP@xtick{}%
  \def\LP@xticklabels{}%
  \ifx\LP@regime@tmp\LP@reg@intermediate
    \def\LP@xtick{0, \tauHtoHA,\tauHAtoA}%
    \def\LP@xticklabels{$0$, $\HtoHAlabel$, $\HAtoAlabel$}%
  \fi
  \ifx\LP@regime@tmp\LP@reg@low
    \def\LP@xtick{0, \tauHAtoA}%
    \def\LP@xticklabels{$0$, $\HAtoAlabel$}%
  \fi
  \ifx\LP@regime@tmp\LP@reg@high
    \def\LP@xtick{0, \tauHtoA}%
    \def\LP@xticklabels{$0$, $\HtoAlabel$}%
  \fi
  \begin{tikzpicture}[trim axis left, trim axis right]%
    \begin{axis}[%
      width=\LP@width,%
      height=\LP@height,%
      xlabel={\AbilityAxisLabel},%
      ylabel={\LossAxisLabel},%
      xmin=\xmin, xmax=\xmax,%
      ymin=\ymin, ymax=\ymax,%
      domain=\xmin:\xmax,%
      samples=200,%
      axis lines=left,%
      legend style={%
        at={(0.97,0.97)},%
        anchor=north east,%
        draw=none,%
        fill=none%
      },%
      ticklabel style={font=\small},%
      label style={font=\small},%
      xtick/.expand once={\LP@xtick},%
      xticklabels/.expand once={\LP@xticklabels},%
      ytick=\empty,%
      set layers,%
      axis on top,%
      clip=true,%
    ]%
    \edef\LP@opt@tmp{\LP@optimalIndicator}%
    \ifx\LP@opt@tmp\LP@opt@bold
    \else
        \ifx\LP@opt@tmp\LP@opt@shadeLower
          \ifx\LP@regime@tmp\LP@reg@intermediate
            \addplot[name path=baseL, domain=\xmin:\tauHtoHA, draw=none] {\ymin};
            \addplot[name path=baseM, domain=\tauHtoHA:\tauHAtoA, draw=none] {\ymin};
            \addplot[name path=baseR, domain=\tauHAtoA:\xmax, draw=none] {\ymin};
            \addplot[name path=humanL, domain=\xmin:\tauHtoHA, draw=none] {\HumanLossFunc};
            \addplot[name path=jointM, domain=\tauHtoHA:\tauHAtoA, draw=none] {\JointLossFunc};
            \addplot[name path=aiR, domain=\tauHAtoA:\xmax, draw=none] {\AILossFunc};
            \addplot[humanshade] fill between[of=baseL and humanL];
            \addplot[jointshade] fill between[of=baseM and jointM];
            \addplot[aishade] fill between[of=baseR and aiR];
          \fi
          \ifx\LP@regime@tmp\LP@reg@low
            \addplot[name path=baseM, domain=\xmin:\tauHAtoA, draw=none] {\ymin};
            \addplot[name path=baseR, domain=\tauHAtoA:\xmax, draw=none] {\ymin};
            \addplot[name path=jointM, domain=\xmin:\tauHAtoA, draw=none] {\JointLossFunc};
            \addplot[name path=aiR, domain=\tauHAtoA:\xmax, draw=none] {\AILossFunc};
            \addplot[jointshade] fill between[of=baseM and jointM];
            \addplot[aishade] fill between[of=baseR and aiR];
          \fi
          \ifx\LP@regime@tmp\LP@reg@high
            \addplot[name path=baseL, domain=\xmin:\tauHtoA, draw=none] {\ymin};
            \addplot[name path=baseR, domain=\tauHtoA:\xmax, draw=none] {\ymin};
            \addplot[name path=humanL, domain=\xmin:\tauHtoA, draw=none] {\HumanLossFunc};
            \addplot[name path=aiR, domain=\tauHtoA:\xmax, draw=none] {\AILossFunc};
            \addplot[humanshade] fill between[of=baseL and humanL];
            \addplot[aishade] fill between[of=baseR and aiR];
          \fi
        \else
          \ifx\LP@regime@tmp\LP@reg@intermediate
            \addplot[humanshade] coordinates {(\xmin,\ymin) (\tauHtoHA,\ymin) (\tauHtoHA,\ymax) (\xmin,\ymax)} \closedcycle;
            \addplot[jointshade] coordinates {(\tauHtoHA,\ymin) (\tauHAtoA,\ymin) (\tauHAtoA,\ymax) (\tauHtoHA,\ymax)} \closedcycle;
            \addplot[aishade] coordinates {(\tauHAtoA,\ymin) (\xmax,\ymin) (\xmax,\ymax) (\tauHAtoA,\ymax)} \closedcycle;
          \fi
          \ifx\LP@regime@tmp\LP@reg@low
            \addplot[jointshade] coordinates {(\xmin,\ymin) (\tauHAtoA,\ymin) (\tauHAtoA,\ymax) (\xmin,\ymax)} \closedcycle;
            \addplot[aishade] coordinates {(\tauHAtoA,\ymin) (\xmax,\ymin) (\xmax,\ymax) (\tauHAtoA,\ymax)} \closedcycle;
          \fi
          \ifx\LP@regime@tmp\LP@reg@high
            \addplot[humanshade] coordinates {(\xmin,\ymin) (\tauHtoA,\ymin) (\tauHtoA,\ymax) (\xmin,\ymax)} \closedcycle;
            \addplot[aishade] coordinates {(\tauHtoA,\ymin) (\xmax,\ymin) (\xmax,\ymax) (\tauHtoA,\ymax)} \closedcycle;
          \fi
        \fi
    \fi
    \edef\LP@opt@tmp{\LP@optimalIndicator}%
    \ifx\LP@opt@tmp\LP@opt@bold
      \addplot[humanloss, line width=\ThinCurveWidth] {\HumanLossFunc};
      \ifLP@showLegend \addlegendentry{$L_H$}\fi
      \addplot[ailoss, line width=\ThinCurveWidth] {\AILossFunc};
      \ifLP@showLegend \addlegendentry{$L_A$}\fi
      \addplot[jointloss, line width=\ThinCurveWidth] {\JointLossFunc};
      \ifLP@showLegend \addlegendentry{$L_{HA}^{CN}$}\fi
      \ifx\LP@regime@tmp\LP@reg@intermediate
        \addplot[humanloss, line width=\ThickCurveWidth, domain=\xmin:\tauHtoHA, forget plot] {\HumanLossFunc};
        \addplot[jointloss, line width=\ThickCurveWidth, domain=\tauHtoHA:\tauHAtoA, forget plot] {\JointLossFunc};
        \addplot[ailoss, line width=\ThickCurveWidth, domain=\tauHAtoA:\xmax, forget plot] {\AILossFunc};
      \fi
      \ifx\LP@regime@tmp\LP@reg@low
        \addplot[jointloss, line width=\ThickCurveWidth, domain=\xmin:\tauHAtoA, forget plot] {\JointLossFunc};
        \addplot[ailoss, line width=\ThickCurveWidth, domain=\tauHAtoA:\xmax, forget plot] {\AILossFunc};
      \fi
      \ifx\LP@regime@tmp\LP@reg@high
        \addplot[humanloss, line width=\ThickCurveWidth, domain=\xmin:\tauHtoA, forget plot] {\HumanLossFunc};
        \addplot[ailoss, line width=\ThickCurveWidth, domain=\tauHtoA:\xmax, forget plot] {\AILossFunc};
      \fi
    \else
        \addplot[humanloss] {\HumanLossFunc};
        \ifLP@showLegend \addlegendentry{$L_H$}\fi
        \addplot[ailoss] {\AILossFunc};
        \ifLP@showLegend \addlegendentry{$L_A$}\fi
        \addplot[jointloss] {\JointLossFunc};
        \ifLP@showLegend \addlegendentry{$L_{HA}^{CN}$}\fi
    \fi
    \ifLP@showCurveLabels
      \edef\LP@lossLabelLocTmp{\LP@lossLabelLocation}%
      \ifx\LP@lossLabelLocTmp\LP@lossLabelLoc@right
        \node[loss curve labels, \humancolor, anchor=south east] at (axis cs:\xLossLabelRight,\yHumanRight) {\HumanCurveLabel};
        \node[loss curve labels, \jointcolor, anchor=south east, rotate=-15] at (axis cs:\xLossLabelRight,\yJointRight) {\JointCurveLabel};
        \node[loss curve labels, \aicolor,   anchor=south east, rotate=-35] at (axis cs:\xLossLabelRight,\yAIRight)   {\AICurveLabel};
      \else
        \node[loss curve labels, \humancolor] at (axis cs:\xWithhold,\yHumanMid) {\HumanCurveLabel};
        \node[loss curve labels, \jointcolor, rotate=-15] at (axis cs:\xAugment,\yJointMid) {\JointCurveLabel};
        \node[loss curve labels, \aicolor, rotate=-40] at (axis cs:\xAutomate,\yAIMid) {\AICurveLabel};
      \fi
    \fi
    \ifx\LP@regime@tmp\LP@reg@intermediate
      \addplot[intersect] coordinates {(\tauHtoHA,\yH)};
      \addplot[intersect] coordinates {(\tauHAtoA,\yAatTauHAtoA)};
      \addplot[axis to point] coordinates {(\tauHtoHA,\ymax) (\tauHtoHA,\ymin)};
      \addplot[axis to point] coordinates {(\tauHAtoA,\ymax) (\tauHAtoA,\ymin)};
    \fi
    \ifx\LP@regime@tmp\LP@reg@low
      \addplot[intersect] coordinates {(\tauHAtoA,\yAatTauHAtoA)};
      \addplot[axis to point] coordinates {(\tauHAtoA,\ymax) (\tauHAtoA,\ymin)};
    \fi
    \ifx\LP@regime@tmp\LP@reg@high
      \addplot[intersect] coordinates {(\tauHtoA,\yH)};
      \addplot[axis to point] coordinates {(\tauHtoA,\ymax) (\tauHtoA,\ymin)};
    \fi
    \ifx\LP@regime@tmp\LP@reg@intermediate
      \node[xaxis regime labels, \humancolor] at (axis cs:\xWithhold,\ymin) {\WithholdActionLabel};
      \node[xaxis regime labels, \jointcolor
      , yshift=-2pt
      ] at (axis cs:\xAugment,\ymin) {\shortstack{\AugmentActionLabelBreak}};
      \node[xaxis regime labels, \aicolor] at (axis cs:\xAutomate,\ymin) {\AutomateActionLabel};
    \fi
    \ifx\LP@regime@tmp\LP@reg@low
      \node[xaxis regime labels, \jointcolor] at (axis cs:\xAugment,\ymin) {\AugmentActionLabel};
      \node[xaxis regime labels, \aicolor] at (axis cs:\xAutomate,\ymin) {\AutomateActionLabel};
    \fi
    \ifx\LP@regime@tmp\LP@reg@high
      \node[xaxis regime labels, \humancolor] at (axis cs:\xWithhold,\ymin) {\WithholdActionLabel};
      \node[xaxis regime labels, \aicolor] at (axis cs:\xAutomate,\ymin) {\AutomateActionLabel};
    \fi
    \end{axis}%
  \end{tikzpicture}%
}%

\begin{figure}
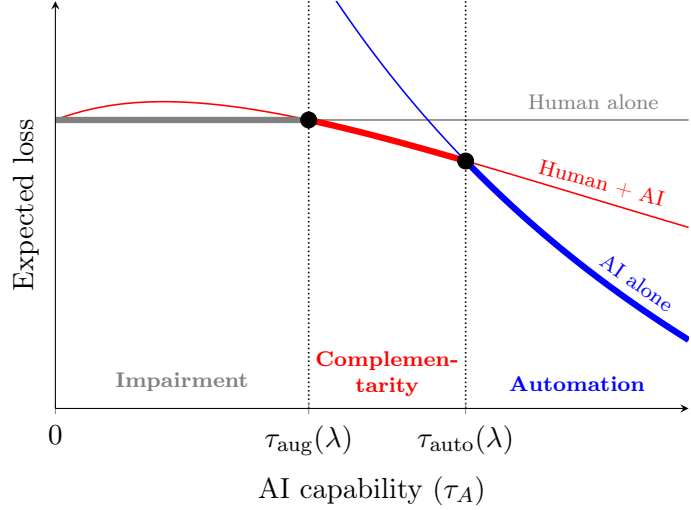
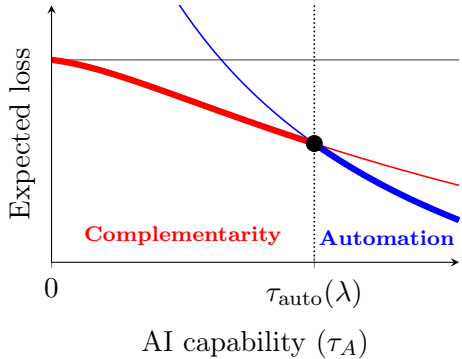
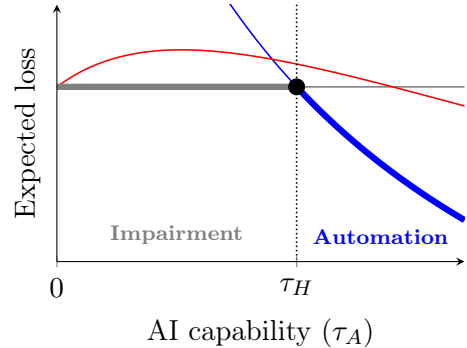

    \centering

    \begin{subfigure}{\linewidth}
        \centering
        \DrawLossPanel{
            lambdaval=\lambdamid,
            xmax=\xmaxLambdaMid,
            ymin=\yminLambdaMid,
            ymax=\ymaxLambdaMid,
            width=10cm,
            height=7cm,
            regime=intermediate,
            showCurveLabels=true,
            showLegend=false,
        }
        \caption{Intermediate overlap $(1/2 < \lambda < \bar{\lambda})$. Here, $\lambda=0.67$.}
        \label{fig:loss-intermediate}
    \end{subfigure}

    \par\medskip

    \begin{subfigure}{0.49\linewidth}
        \centering
        \DrawLossPanel{
            lambdaval=\lambdalow,
            xmax=\xmaxLambdaLow,
            ymin=\yminLambdaLow,
            ymax=\ymaxLambdaLow,
            width=7cm,
            height=5cm,
            regime=low,
            showCurveLabels=false,
            showLegend=false,
        }
        \caption{Low overlap $(\lambda \leq 1/2)$. Here, $\lambda=0.45$.}
        \label{fig:loss-low}
    \end{subfigure}
    \hfill
    \begin{subfigure}{0.49\linewidth}
        \centering
        \DrawLossPanel{
            lambdaval=\lambdahigh,
            xmax=\xmaxLambdaHigh,
            ymin=\yminLambdaHigh,
            ymax=\ymaxLambdaHigh,
            width=7cm,
            height=5cm,
            regime=high,
            showCurveLabels=false,
            showLegend=false,
        }
        \caption{High overlap $(\lambda \geq \bar{\lambda})$. Here, $\lambda=0.85$.}
        \label{fig:loss-high}
    \end{subfigure}

    \caption{Expected loss as a function of AI capability $\tau_A$ in the conceptual model (Section~\ref{sec:conceptual}). Shown are losses of human acting alone ($L_{\mathrm{H}}$, gray), AI acting alone ($L_{\mathrm{AI}}$, blue), and AI-assisted human who exhibits correlation neglect ($\widehat L_{\mathrm{H-AI}}$, red). Regime thresholds and the best-performing agent in each regime are labeled on the $x$-axis, with the smallest lost curve thickened. Each panel fixes $\tau_Y=1, \tau_H=1$ and $\lambda$.}
    \label{fig:loss-three-panels}
\end{figure}

\begin{itemize}
\item \textbf{Low overlap ($0<\lambda<\frac12$, Figure~\ref{fig:loss-low}).}\\
In these domains, overlap is mild, so despite human behavioral misspecification---correlation neglect---the human always benefits from AI assistance. Improving AI capabilities transitions through \textit{two regimes}.
\begin{align*}
&\text{Regime I (\textit{Complementarity}): } \tau_A\in(0,\tau_{\mathrm{auto}}(\lambda))
&&\Rightarrow
\widehat L_{\mathrm{H-AI}}=\min\{L_{\mathrm{H}},\widehat L_{\mathrm{H-AI}},L_{\mathrm{AI}}\},\\
&\text{Regime II (\textit{Automation}): } \tau_A\in(\tau_{\mathrm{auto}}(\lambda),+\infty)
&&\Rightarrow
L_{\mathrm{AI}}=\min\{L_{\mathrm{H}},\widehat L_{\mathrm{H-AI}}, L_{\mathrm{AI}}\}.
\end{align*}
Weaker AI benefits the human but is not strong enough to automate the tasks. As AI capabilities grow, human's behavioral misspecification becomes penalizing and automation eventually dominates.

\item \textbf{Intermediate overlap ($\tfrac12<\lambda<\bar\lambda$, Figure~\ref{fig:loss-intermediate}).}\\
Overlap here is large enough that weak AI can penalize the human where the behavioral misspecification dominates the AI marginal value. The overlap is however not large enough to render complementarity impossible. In these domains, augmentation happens before automation and the thresholds satisfy
\[
0<\tau_{\mathrm{aug}}(\lambda)<\tau_{\mathrm{auto}}(\lambda).
\]
\textit{Three regimes} arise when increasing AI capabilities.
\begin{align*}
&\text{Regime I (\textit{Impairment}): } \tau_A\in(0,\tau_{\mathrm{aug}}(\lambda))
&&\Rightarrow
L_{\mathrm{H}}=\min\{L_{\mathrm{H}},\widehat L_{\mathrm{H-AI}},L_{\mathrm{AI}}\},\\
&\text{Regime II (\textit{Complementarity}): } \tau_A\in(\tau_{\mathrm{aug}}(\lambda),\tau_{\mathrm{auto}}(\lambda))
&&\Rightarrow
\widehat L_{\mathrm{H-AI}}=\min\{L_{\mathrm{H}},\widehat L_{\mathrm{H-AI}},L_{\mathrm{AI}}\},\\
&\text{Regime III (\textit{Automation}): } \tau_A\in(\tau_{\mathrm{auto}}(\lambda),+\infty)
&&\Rightarrow
L_{\mathrm{AI}}=\min\{L_{\mathrm{H}},\widehat L_{\mathrm{H-AI}},L_{\mathrm{AI}}\}.
\end{align*}
In this case, complementarity only arises in the intermediate region:
only after AI is capable enough so that its marginal value overcomes human's behavioral misspecification, but
before AI becomes so strong that it becomes optimal to bypass the human.

\item \textbf{High overlap ($\bar\lambda< \lambda<1$, Figure~\ref{fig:loss-high}).}
In these domains, the behavior misspecification's impact is so severe, that the AI-assisted human is always dominated by the human-alone or the AI-alone: the human-AI interaction is significantly penalizing.

Here, we transition directly from human-only to AI-only regimes.
\begin{align*}
&\text{Regime I (\textit{Impairment}): } \tau_A\in(0,\tau_H)
&&\Rightarrow
L_{\mathrm{H}}=\min\{L_{\mathrm{H}},\widehat L_{\mathrm{H-AI}},L_{\mathrm{AI}}\},\\
&\text{Regime II (\textit{Automation}): } \tau_A\in(\tau_H,+\infty)
&&\Rightarrow
L_{\mathrm{AI}}=\min\{L_{\mathrm{H}},\widehat L_{\mathrm{H-AI}},L_{\mathrm{AI}}\}.
\end{align*}
\end{itemize}

\def\tauzero{1}
\def\tauH{1}

\def\phasexmin{0}
\def\phasexminsafe{0.02}  %
\def\phasexmax{2.2}
\def\phaseymin{0}
\def\phaseymax{1}

\def\numsamples{300}

\pgfplotsset{
    phase hard boundary/.style={line width=\MediumCurveWidth, black, solid},
    phase soft boundary/.style={line width=\ThinCurveWidth, darkgray, dashed},
    phase feasible boundary/.style={line width=\ThinCurveWidth, black, solid},
    phase ovlp1 boundary/.style={line width=\ThinCurveWidth, black, solid},
}

\tikzset{
    phase regime labels/.style={
        anchor=center, font=\normalsize\bfseries, inner sep=3pt,
    },
    phase label human/.style={
        phase regime labels,
        darkgray  %
    },
    phase label joint/.style={
        phase regime labels,
        \jointcolor,
        fill=\jointshadedarkcolor,
        fill opacity=1,
        text opacity=1,
        draw=none,
        rounded corners=1pt,
    },
    phase label ai/.style={
        phase regime labels,
        \aicolor,
        rotate=-20
    },
    boundary labels/.style={
        anchor=north, font=\small,
        inner sep=4pt
    }
}

\pgfmathsetmacro{\phaselambdastar}{(\tauzero+2*\tauH)/(2*(\tauzero+\tauH))} %

\pgfmathsetmacro{\Tsum}{\tauzero+\tauH}
\pgfmathsetmacro{\xHvsJcap}{(-\Tsum + sqrt(\Tsum*\Tsum + 8*\tauH*\Tsum))/2}

\pgfmathsetmacro{\xWithhold}{0.5*(\phasexmin+\tauH)}
\pgfmathsetmacro{\xAugment}{\tauH}
\pgfmathsetmacro{\xAutomate}{0.5*(\tauH+\phasexmax)}

\pgfmathsetmacro{\lamHJwith}{0.5*(1 + \xWithhold/(\tauzero+\tauH))}
\pgfmathsetmacro{\lamHJaug}{0.5*(1 + \xAugment/(\tauzero+\tauH))}

\pgfmathsetmacro{\lamJAaug}{\tauH*(\xAugment+\tauzero+\tauH)/(2*\xAugment*(\xAugment+\tauzero))}
\pgfmathsetmacro{\lamJAauto}{\tauH*(\xAutomate+\tauzero+\tauH)/(2*\xAutomate*(\xAutomate+\tauzero))}

\pgfmathsetmacro{\lamFeasauto}{min(\phaseymax, \tauH/\xAutomate)}

\pgfmathsetmacro{\yWithhold}{0.5*(\phaseymax + \lamHJwith)}

\pgfmathsetmacro{\yAugment}{ifthenelse(\xAugment < \tauH,
    0.5*(\phaseymin + \lamHJaug),
    0.5*(\phaseymin + \lamJAaug)
)}

\pgfmathsetmacro{\yAutomate}{0.5*(\lamJAauto + \lamFeasauto)}

\pgfmathsetmacro{\yWithhold}{max(\phaseymin, min(\phaseymax, \yWithhold))}
\pgfmathsetmacro{\yAugment}{max(\phaseymin, min(\phaseymax, \yAugment))}
\pgfmathsetmacro{\yAutomate}{max(\phaseymin, min(\phaseymax, \yAutomate))}

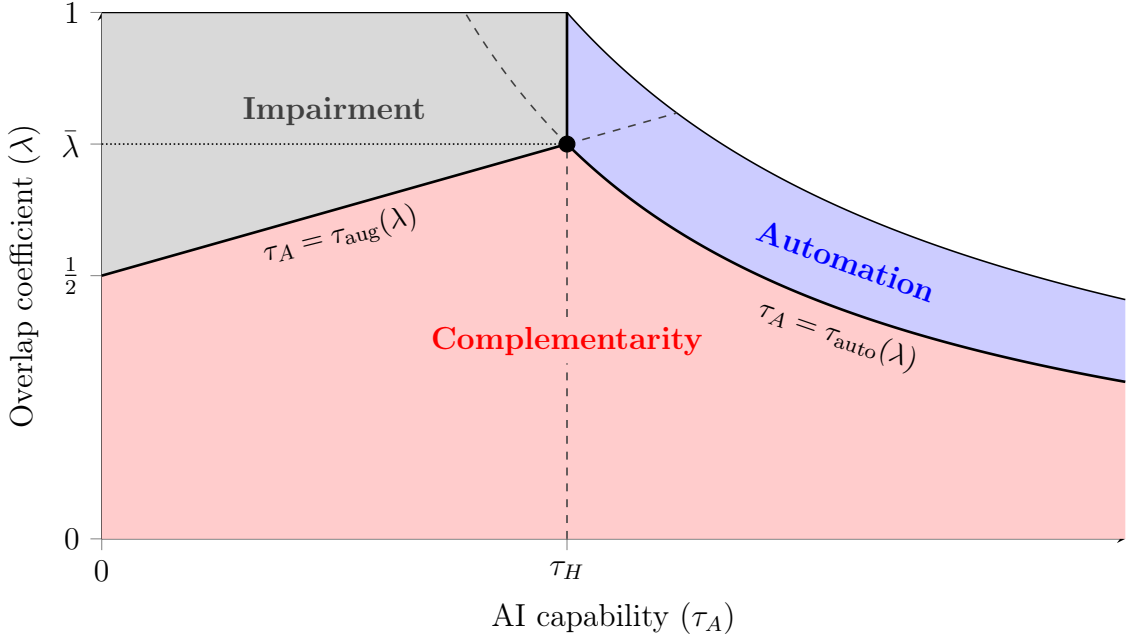
\begin{figure}[!t]
    \centering

    \begin{tikzpicture}
    \begin{axis}[
        width=0.92\linewidth,
        height=0.52\linewidth,
        xmin=\phasexmin, xmax=\phasexmax,
        ymin=\phaseymin, ymax=\phaseymax,
        xlabel={\AbilityAxisLabel},
        ylabel={\OverlapAxisLabel},
        axis lines=left,
        tick align=outside,
        clip=true,
        legend style={draw=none, fill=none, at={(0.02,0.98)},       
            anchor=north west},
        unbounded coords=jump,
        restrict y to domain=\phaseymin:\phaseymax,
        xtick={\phasexmin, \tauH},
        xticklabels={$0$, $\tau_H$},
        ytick={\phaseymin, 0.5, \phaselambdastar, 1},
        yticklabels={$0$, $\frac{1}{2}$, $\bar{\lambda}$, $1$},
    ]

    \addplot[name path=feas_left, draw=none, domain=\phasexmin:\tauH, samples=\numsamples]
        ({x},{1});
    \addplot[name path=feas_right, draw=none, domain=\tauH:\phasexmax, samples=\numsamples]
        ({x},{\tauH/x});
    
    \addplot[name path=HvsA, draw=none, line width=\MediumCurveWidth, black] coordinates {(\tauH,\phaseymin) (\tauH,\phaseymax)};
    
    \addplot[name path=HvsJ, draw=none, domain=\phasexmin:\phasexmax, samples=\numsamples]
        ({x},{0.5*(1 + x/(\tauzero+\tauH))});
    
    \addplot[name path=JvsA, draw=none, domain=\phasexminsafe:\phasexmax, samples=\numsamples]
        ({x},{\tauH*(x+\tauzero+\tauH)/(2*x*(x+\tauzero))});
    
    \addplot[name path=bottom, draw=none, domain=\phasexmin:\phasexmax, samples=\numsamples]
        ({x},{\phaseymin});

    \addplot[jointshade dark] fill between[of=bottom and HvsJ, soft clip={domain=\phasexmin:\tauH}];
    
    \addplot[humanshade dark] fill between[of=HvsJ and feas_left, soft clip={domain=\phasexmin:\tauH}];
    
    \addplot[jointshade dark] fill between[of=bottom and JvsA, soft clip={domain=\tauH:\phasexmax}];
    
    \addplot[aishade dark] fill between[of=JvsA and feas_right, soft clip={domain=\tauH:\phasexmax}];

    \addplot[phase ovlp1 boundary, black] coordinates {(\phasexmin,1) (\tauH,1)};
    \addplot[phase feasible boundary, black, domain=\tauH:\phasexmax, samples=\numsamples]
        ({x},{\tauH/x});
        
    \addplot[phase soft boundary] coordinates {(\tauH,\phaseymin) (\tauH,\phaselambdastar)};
    \addplot[phase hard boundary] coordinates {(\tauH,\phaselambdastar) (\tauH,1)};
    
    \addplot[phase hard boundary, domain=\phasexmin:\tauH, samples=\numsamples]
        ({x},{0.5*(1 + x/(\tauzero+\tauH))});
    \addplot[phase soft boundary, domain=\tauH:\xHvsJcap, samples=\numsamples]
        ({x},{0.5*(1 + x/(\tauzero+\tauH))});

    \addplot[phase soft boundary, domain=\phasexminsafe:\tauH, samples=\numsamples]
        ({x},{\tauH*(x+\tauzero+\tauH)/(2*x*(x+\tauzero))});
    \addplot[phase hard boundary, domain=\tauH:\phasexmax, samples=\numsamples]
        ({x},{\tauH*(x+\tauzero+\tauH)/(2*x*(x+\tauzero))});

    \addplot[intersect] coordinates {(\tauH,\phaselambdastar)};
    \addplot[axis to point] coordinates {(\phasexmin,\phaselambdastar) (\tauH,\phaselambdastar)};

    \node[phase label human] at (axis cs:\xWithhold,\yWithhold) {\WithholdActionLabel};
    \node[phase label joint] at (axis cs:\xAugment,\yAugment) {\AugmentActionLabel};
    \node[phase label ai] at (axis cs:\xAutomate,\yAutomate) 
    {\AutomateActionLabel};
    
    \node[boundary labels, rotate=14] at (axis cs:\xWithhold,\lamHJwith) {$\tau_A = \HtoHAlabel$};
    \node[boundary labels, rotate=-19] at (axis cs:\xAutomate,\lamJAauto) {$\tau_A = \HAtoAlabel$};

    \end{axis}
    \end{tikzpicture}

    \caption{Phase diagram of agent-optimal regimes in the conceptual model (Section~\ref{sec:conceptual}) when the human exhibits correlation neglect. The diagram varies AI capability $\tau_A$ ($x$-axis) and overlap coefficient $\lambda$ ($y$-axis): each $(\tau_A, \lambda)$ point corresponds to a unique joint distribution of signals $(Y,H,A)$. Regimes are labeled according to the best-performing agent: AI-assisted human (complementarity, red), human alone (impairment, gray), and AI alone (automation, blue). Threshold functions where two agents perform identically are shown as solid curves where they define regime boundaries, and dashed curves elsewhere. Each panel in Figure~\ref{fig:loss-three-panels} shows loss functions along a horizontal slice of this diagram. The figure fixes $\tau_Y=1$ and $\tau_H=1$.
    }
    \label{fig:comp-phases}
\end{figure}

A phase diagram of all three complementarity regimes, and how they vary with overlap $\lambda$ and AI capability $\tau_A$ when other model parameters are fixed, is shown in Figure~\ref{fig:comp-phases}.

\section{Conclusion and Discussion}

This paper adopts a decision-theoretic framework to study how AI assistance affects human decision-making, especially when humans act imperfectly, such as by exhibiting correlation neglect. We introduce a Bayesian model in which an AI-assisted human combines informative signals from herself and the AI, and decompose the benefit of AI assistance into the \textit{marginal value} of AI information (relative to the human) and a \textit{behavioral penalty} that arises from human misuse. We then quantify the performance of an AI-assisted human in a micro-founded conceptual model as a function of two key factors: the AI's standalone \textit{capability}, and the \textit{information overlap} that captures redundancy of AI information for the human. This analysis characterizes three distinct regimes of human-AI interaction: \textit{impairment, automation, and complementarity.}
Crucially, our results show that the value of AI assistance does not depend solely on its own capability: other factors, such as information overlap and human behavioral biases, play a central role in designing optimal forms of human--AI collaboration.

Our results carry important design implications. On the AI side, our result suggests that AI systems should be trained to predict what humans miss, not what they already know. Residual learning approaches that target the component of the state unexplained by human information can reduce effective overlap while preserving accuracy. On the human side, initiatives to reduce correlation neglect---such as through training or improved user interfaces---may expand the range of AI models and capabilities under which complementarity can be achieved.

Several directions for future work emerge from our analysis. First, the framework can be extended beyond correlation neglect to other behavioral biases, such as overconfidence in one's own expertise or algorithm aversion. Another natural extension is to model repeated human--AI interaction, in which the human gradually learns the conditional dependence between signals and reduce misspecification in her beliefs over time. Finally, empirical measurement of the overlap coefficient and its effects on human--AI performance across different domains, using data on medical diagnosis or human--AI interaction design, can ground our theoretical predictions and inform practical deployment decisions.

\section*{Acknowledgment}
We are grateful to the MIT
Gen-AI Consortium for financial support.

\bibliographystyle{plainnat}
\bibliography{dingwenkbib,human-ai}
\newpage %
\begin{appendices}
\section{Heterogeneous-precision extension (including Proof of Proposition~\ref{prop:lambda_concentration})}
\label{appendix:hetero}
Here, we provide an extension where information cues can be heterogeneous. We prove a generalized version of Proposition~\ref{prop:lambda_concentration}, and Proposition~\ref{prop:lambda_concentration} itself directly follows as a special case.

In this extension, let $\cS=\{s_1,\dots,s_N\}$ be a collection of conditionally independent \emph{heterogeneous} Gaussian primitive cues
about state $Y$:
\[
s_i \perp\!\!\!\perp s_j \mid Y \quad (i\neq j),
\qquad
s_i \mid Y \sim \mathcal{N} \left(Y, N\tau_i^{-1}\right),
\]
where $\tau_i>0$ is cue $i$'s normalized precision.

\begin{assumption}[Bounded variation]
\label{assum:bounded}
There exist constants $0<\underline\tau\le \overline\tau<\infty$ such that for all
$i\in S$,
\[
\underline\tau \le \tau_i \le \overline\tau.
\]
\end{assumption}

The human observes a subset $H\subseteq S$ and the AI observes a subset $A\subseteq S$.
In the heterogeneous-precision setup, the natural sufficient statistic for $Y$ in each
subset is the precision-weighted average:
\[
H := \frac{\sum_{i\in H}\tau_i s_i}{\sum_{i\in H}\tau_i},
\qquad
A := \frac{\sum_{i\in A}\tau_i s_i}{\sum_{i\in A}\tau_i}.
\]
Define the total precision masses
\[
T_H := \sum_{i\in H}\tau_i,
\qquad
T_A := \sum_{i\in A}\tau_i.
\]
Then $H\mid Y \sim \mathcal{N} \left(Y, N T_H^{-1}\right)$ and
$A\mid Y \sim \mathcal{N} \left(Y, N T_A^{-1}\right)$.

\begin{lemma}[Weighted overlap formula]
\label{lem:weightedlambda}
Recall the overlap coefficient
\[
\lambda := \frac{\Cov(H,A\mid Y)}{\Var(H\mid Y)}.
\]
Under this extension model,
\[
\lambda
=
\frac{\sum_{i\in A\cap H}\tau_i}{\sum_{i\in A}\tau_i}
=
\frac{T_{A\cap H}}{T_A},
\qquad
\text{where } T_{A\cap H}:=\sum_{i\in A\cap H}\tau_i.
\]
In particular, if $\tau_i=\tau$ for some $\tau>0$ and for all $i$, then $\lambda = |A\cap H|/|A|$.
\end{lemma}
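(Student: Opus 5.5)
We compute the two conditional second moments directly from the cue representation, conditional on $Y$. Write each cue as $s_i = Y + \varepsilon_i$, where the $\varepsilon_i$ are independent of each other given $Y$ and $\varepsilon_i\mid Y\sim\cN(0,N\tau_i^{-1})$. Because the weights in $H$ and $A$ sum to one, we get
\[
H - Y = \frac{1}{T_H}\sum_{i\in H}\tau_i\varepsilon_i,
\qquad
A - Y = \frac{1}{T_A}\sum_{j\in A}\tau_j\varepsilon_j .
\]
Conditional on $Y$, $H-Y$ and $A-Y$ are therefore linear combinations of the independent centered Gaussians $\varepsilon_i$.

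Next we compute the covariance. Expanding the bilinear form and using $\Cov(\varepsilon_i,\varepsilon_j\mid Y)=0$ for $i\neq j$, only the diagonal terms with $i\in A\cap H$ survive:
\[
\Cov(H,A\mid Y)=\frac{1}{T_HT_A}\sum_{i\in A\cap H}\tau_i^2\cdot N\tau_i^{-1}
=\frac{N\,T_{A\cap H}}{T_HT_A}.
\]
The same computation with $A$ replaced by $H$ gives $\Var(H\mid Y)=N T_H/T_H^2 = N/T_H$. This matches the stated law $H\mid Y\sim\cN(Y,NT_H^{-1})$, and likewise for $A$.

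Taking the ratio, the factor $N/T_H$ cancels and we obtain $\lambda = T_{A\cap H}/T_A$. In the homogeneous case $\tau_i\equiv\tau$ we have $T_{A\cap H}=\tau|A\cap H|$ and $T_A=\tau|A|$, so $\lambda=|A\cap H|/|A|$, recovering the formula in Lemma~\ref{lem:decomposition}.

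No step is a real obstacle. The only care needed is in the bookkeeping: the variance factor $N\tau_i^{-1}$ must cancel against one power of the weight $\tau_i$, and the normalization must be by $T_H$ rather than by $T_A$. We also note that $A$ and $H$ must be nonempty so that $T_A,T_H>0$, which Assumption~\ref{assum:bounded} guarantees once the sets are nonempty.
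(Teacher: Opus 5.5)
Your proof is correct and follows the paper's own argument: both compute $\Cov(H,A\mid Y)=\frac{N}{T_HT_A}T_{A\cap H}$ using conditional independence of the cues, compute $\Var(H\mid Y)=N/T_H$ the same way, and take the ratio. Writing $s_i=Y+\varepsilon_i$ and noting that $T_H,T_A>0$ requires nonempty sets are minor additions that the paper leaves implicit.
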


\begin{proof}
Direct computation shows that
\begin{align*}
\Cov(H,A\mid Y)
&= \sum_{i\in A\cap H} \frac{\tau_i}{T_H}\frac{\tau_i}{T_A}\cdot \Var(s_i\mid Y) = \frac{N}{T_H T_A}\sum_{i\in A\cap H}\tau_i
= \frac{N}{T_H T_A}\,T_{A\cap H}.
\end{align*}
Similarly,
\[
\Var(H\mid Y)
= \sum_{i\in H}\frac{\tau_i^2}{T_H^2}\cdot \Var(s_i\mid Y)
= \frac{N}{T_H}.
\]
Therefore,
\[
\lambda
= \frac{\Cov(H,A\mid Y)}{\Var(H\mid Y)}
= \frac{\frac{N}{T_H T_A}T_{A\cap H}}{\frac{N}{T_H}}
= \frac{T_{A\cap H}}{T_A}.
\]
\end{proof}

The AI-accessible pool $\cS_{\mathrm{ai-acc}}\subseteq \cS$ and its complement $\cS_{\mathrm{ai-noacc}}$ is defined similarly.
The AI constructs $A$ by drawing a subset of size $a|\cS|$ uniformly at random from $\cS_{\mathrm{ai-acc}}$.

Define the total precision mass in the AI-accessible pool and in the human and AI-accessible intersection:
\[
T_{\mathrm{ai-acc}} := \sum_{i\in \cS_{\mathrm{ai-acc}}}\tau_i,
\qquad
T_{H\cap\mathrm{ai-acc}} := \sum_{i\in \cH\cap \cS_{\mathrm{ai-acc}}}\tau_i.
\]
Define also the \emph{precision-weighted overlap rate}
\[
\theta_N := \frac{T_{H\cap\mathrm{ai-acc}}}{T_{\mathrm{ai-acc}}}.
\]

\begin{proposition}[Overlap stabilization with heterogeneity]
\label{prop:hetero}
Under Assumption~\ref{assum:bounded}, let $\cA$ be drawn uniformly among all $(a|\cS|)$-element subsets
of $\cS_{\mathrm{ai-acc}}$. Let $\lambda$ be the overlap coefficient defined in
Lemma~\ref{lem:weightedlambda}. Then
\[
\lambda \toP \theta_N
\quad\text{as } N=|\cS|\to\infty.
\]
In particular, if $\theta_N\to \theta$ for some constant $\theta\in[0,1]$, then $\lambda\toP \theta$.
\end{proposition}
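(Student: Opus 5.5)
By Lemma~\ref{lem:weightedlambda}, $\lambda = T_{A\cap H}/T_A$. The plan is to show that the numerator and the denominator, each divided by the sample size, concentrate around deterministic means whose ratio is exactly $\theta_N$. The sampling is without replacement, so the natural tool is Hoeffding's inequality for sampling without replacement, or simply a Chebyshev bound using the finite-population variance formula.

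Write $n := a|\cS| = aN$ for the sample size and $M := |\cS_{\mathrm{ai-acc}}| = mN$ for the pool size. Let $\cA = \{i_1,\dots,i_n\}$ be the uniform sample from the pool.

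\begin{itemize}
\item The denominator is $T_A = \sum_{j} \tau_{i_j}$, a sample sum of the values $\{\tau_i : i\in\cS_{\mathrm{ai-acc}}\}$, each lying in $[\underline\tau,\overline\tau]$.
\item The numerator is $T_{A\cap H} = \sum_j \tau_{i_j}\mathbf 1\{i_j\in\cH\}$, a sample sum of the values $x_i := \tau_i\mathbf 1\{i\in \cH\}\in[0,\overline\tau]$.
\end{itemize}

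The means are
\[
\Eb[T_A] = \frac{n}{M}T_{\mathrm{ai-acc}}, \qquad \Eb[T_{A\cap H}] = \frac{n}{M}T_{H\cap \mathrm{ai-acc}}.
\]
Because the values are bounded, the variance of a without-replacement sample sum is at most the with-replacement variance, which is at most $n\overline\tau^2$. Chebyshev then gives
\[
\frac{T_A}{n} - \frac{T_{\mathrm{ai-acc}}}{M} \toP 0
\quad\text{and}\quad
\frac{T_{A\cap H}}{n} - \frac{T_{H\cap\mathrm{ai-acc}}}{M} \toP 0,
\]
with deviations of order $O_p(n^{-1/2})$. Since $n = aN \to\infty$, both go to zero.

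To pass to the ratio, I use
\[
\lambda-\theta_N = \frac{T_{A\cap H}/n - \theta_N\,T_A/n}{T_A/n}.
\]
The numerator equals
\[
\Bigl(\frac{T_{A\cap H}}{n}-\frac{T_{H\cap\mathrm{ai-acc}}}{M}\Bigr) - \theta_N\Bigl(\frac{T_A}{n}-\frac{T_{\mathrm{ai-acc}}}{M}\Bigr),
\]
because $\theta_N T_{\mathrm{ai-acc}}/M = T_{H\cap\mathrm{ai-acc}}/M$ by definition. Since $\theta_N\in[0,1]$, this numerator tends to $0$ in probability. The denominator satisfies $T_A/n\ge\underline\tau>0$ deterministically by Assumption~\ref{assum:bounded}. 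Hence $|\lambda-\theta_N|\le |\text{numerator}|/\underline\tau\toP 0$.

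The final claim follows from the triangle inequality: if $\theta_N\to\theta$, then $|\lambda-\theta|\le|\lambda-\theta_N|+|\theta_N-\theta|\toP 0$. Proposition~\ref{prop:lambda_concentration} is the special case $\tau_i\equiv\tau$. In that case $\theta_N = |\cH\cap\cS_{\mathrm{ai-acc}}|/|\cS_{\mathrm{ai-acc}}| = kN/(mN) = k/m$.

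The main obstacle is the ratio step. The mean targets $T_{\mathrm{ai-acc}}/M$ and $T_{H\cap\mathrm{ai-acc}}/M$ themselves vary with $N$, so the argument must not rely on these sequences converging. The lower bound $\underline\tau$ is what keeps the denominator away from zero uniformly in $N$ and makes the argument work without any such convergence. The other point to verify is the variance bound for sampling without replacement. I will either cite the finite-population correction factor $(M-n)/(M-1)\le 1$, or invoke Hoeffding's 1963 result that the without-replacement sum is dominated in convex order by the with-replacement sum.
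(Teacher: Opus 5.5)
Your proposal is correct and is essentially the paper's proof. Both arguments apply Chebyshev to the without-replacement sample means of $\tau_i$ and $\tau_i\mathbf{1}\{i\in\cH\}$ using the finite-population variance formula, then use the same ratio decomposition with the denominator $T_A/n\ge\underline\tau$ bounded away from zero. The only cosmetic difference is that you use $\theta_N\le 1$ to get the constant $1/\underline\tau$ on the second deviation term, where the paper uses the cruder $\overline\tau/\underline\tau^2$.
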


\begin{proof}
Write $M:=|\cS_{\mathrm{ai-acc}}|$ and $n_N:=aN$.
For each $i\in \cS_{\mathrm{ai-acc}}$, define two bounded arrays
\[
w_i := \tau_i,
\qquad
u_i := \tau_i\mathbf{1}\{i\in \cH\}.
\]
Then
\[
\sum_{i\in \cA}\tau_i=\sum_{i\in \cA} w_i,
\qquad
\sum_{i\in \cA\cap \cH}\tau_i=\sum_{i\in \cA} u_i,
\qquad\text{so}\qquad
\lambda=\frac{\sum_{i\in \cA}u_i}{\sum_{i\in \cA}w_i}.
\]
Define sample means over the random subset $\cA$:
\[
\widehat{\mu}_{u}:=\frac{1}{n_N}\sum_{i\in \cA}u_i,
\qquad
\widehat{\mu}_{w}:=\frac{1}{n_N}\sum_{i\in \cA}w_i.
\]
Also define the corresponding population means over $\cS_{\mathrm{ai-acc}}$:
\[
\mu_{u}:=\frac{1}{M}\sum_{i\in \cS_{\mathrm{ai-acc}}}u_i
= \frac{T_{H\cap \mathrm{ai-acc}}}{M},
\qquad
\mu_{w}:=\frac{1}{M}\sum_{i\in \cS_{\mathrm{ai-acc}}}w_i
= \frac{T_{\mathrm{ai-acc}}}{M}.
\]
Then
\[
\lambda=\frac{\widehat{\mu}_{u}}{\widehat{\mu}_{w}},
\qquad
\theta_N=\frac{\mu_{u}}{\mu_{w}}.
\]
Because $\cA$ is drawn uniformly among all $n_N$-subsets of $\cS_{\mathrm{ai-acc}}$, it is a simple random
sample without replacement from the finite population $\cS_{\mathrm{ai-acc}}$. Therefore, for $x\in\{u,w\}$, (see, e.g., \cite{lohr2021sampling})
\[
\mathbb{E}[\widehat{\mu}_{x}] = \mu_{x},
\qquad
\mathrm{Var}(\widehat{\mu}_{x})
=
\Bigl(1-\frac{n_N}{M}\Bigr)\frac{S_{x}^2}{n_N},
\]
where $S_{x}^2=\frac{1}{M-1}\sum_{i\in \cS_{\mathrm{ai-acc}}}(x_i-\bar x)^2$ denotes the finite-population variance of $\{x_i\}_{i\in \cS_{\mathrm{ai-acc}}}$.

Under Assumption~\ref{assum:bounded}, we have $w_i\in[\underline{\tau},\overline{\tau}]$ and $u_i\in[0,\overline{\tau}]$.
Therefore,
\[
S_{w}^2 \leq (\overline{\tau}-\underline{\tau})^2,
\qquad
S_{u}^2 \le \overline{\tau}^2,
\]
and thus
\[
\Var(\widehat{\mu}_{w})\leq \frac{(\overline{\tau}-\underline{\tau})^2}{n_N},
\qquad
\Var(\widehat{\mu}_{u})\leq \frac{\overline{\tau}^2}{n_N}.
\]
Since $n_N=aN\to\infty$, both variances go to $0$. By Chebyshev's inequality, for every $\varepsilon>0$,
\[
\mathbb{P}\bigl(|\widehat{\mu}_{w}-\mu_{w}|>\varepsilon\bigr)\to 0,
\qquad
\mathbb{P}\bigl(|\widehat{\mu}_{u}-\mu_{u}|>\varepsilon\bigr)\to 0,
\]
i.e.,
\[
\widehat{\mu}_{w}-\mu_{w}\xrightarrow{p}0,
\qquad
\widehat{\mu}_{u}-\mu_{u}\xrightarrow{p}0.
\]

Now let's show the convergence of the ratio.
Note that $\widehat{\mu}_{w}\ge \underline{\tau}>0$ almost surely, and likewise $\mu_{w}\ge \underline{\tau}>0$.
Using
\[
\lambda-\theta_N
=
\frac{\widehat{\mu}_{u}}{\widehat{\mu}_{w}}-\frac{\mu_{u}}{\mu_{w}}
=
\frac{\widehat{\mu}_{u}-\mu_{u}}{\widehat{\mu}_{w}}
-\frac{\mu_{u}}{\mu_{w}}\cdot
\frac{\widehat{\mu}_{w}-\mu_{w}}{\widehat{\mu}_{w}},
\]
we obtain
\[
|\lambda-\theta_N|
\le
\frac{1}{\underline{\tau}}\,|\widehat{\mu}_{u}-\mu_{u}|
+
\frac{\overline{\tau}}{\underline{\tau}^2}\,|\widehat{\mu}_{w}-\mu_{w}|.
\]
The right-hand side converges to $0$ in probability; therefore,
\[
\lambda\xrightarrow{p}\theta_N.
\]
\end{proof}

\begin{remark}[Relationship to the baseline model]
Our baseline model corresponds to the case
\[
\tau_i=\tau \ \forall i, \quad |\cS_{\mathrm{ai-acc}}|=mN, \quad |\cH\cap \cS_{\mathrm{ai-acc}}|=kN,
\]
so that
\[\theta_N=\frac{\sum_{i\in \cH\cap \cS_{\mathrm{ai-acc}}}\tau_i}{\sum_{i\in  \cS_{\mathrm{ai-acc}}}\tau_i}=\frac{k}{m},\] and Proposition~\ref{prop:hetero} reduces to the original overlap stabilization
$\lambda\toP k/m$.
\end{remark}

\section{Missing Proofs}
\subsection{Proof of Proposition~\ref{prop:relative_complement}}
Consider a binary state $Y\in\{0,1\}$ with prior $Y\sim\bern(\frac12)$, real-valued decision $d\in\RR$, and  logistic loss $L(Y,d)=-Y\log d-(1-Y)\log (1-d)$ where log base is always set to 2.

First, we show that the ratio $\frac{v (A\mid H)}{v(A)}$ can take any value in $[1,\infty)$. Take two independent noise bits $U\sim \bern (p)$ and $V\sim \bern (q)$. Let
\[
A=Y\oplus U \oplus V,
\]
and 
\[
H=V.
\]
Then we have that
\begin{align*}
v(A\mid H)&=I(Y;A\mid H)\\
& = I (Y;Y\oplus U \oplus V\mid V)\\
&= I (Y;Y\oplus U)\\
&= 1-h (p),
\end{align*}
where $h(\cdot)$ is the binary entropy function.

For $v(A)$, we have 
\begin{align*}
v(A)&=I(Y;A)\\
& = I (Y;Y\oplus (U \oplus V))\\
&= 1-h (p+q-2pq).
\end{align*}
Therefore, the ratio is
\[
\frac{v(A\mid H)}{v(A)}=\frac{1-h (p)}{1-h (p+q-2pq)}.
\]
When $q=0$ we have $\frac{v(A\mid H)}{v(A)}=1$. When $q\uparrow \frac12$, we have $\frac{v(A\mid H)}{v(A)}\rightarrow +\infty$. By continuity, for any $t\in[1,\infty)$ there exists some $q$ such that the ratio equals $t$.

Second, we show that the ratio $\frac{v (A\mid H)}{v(A)}$ can take any value in $[0,1)$. Take an independent noise bit $U\sim \bern(p)$ and an independent indicator $M\sim \bern (1-t)$. Let 
\[
A= Y\oplus U,
\]
and
\[
H=
\begin{cases}
    A, & \text{if $M=1$}\\
    \perp, & \text{if $M=0$},
\end{cases}
\]
where $\perp$ indicates a null signal with no information. 

In this case, 
\[
v(A)=I(Y;A)=1- h(p),
\]
while
\[
v(A\mid H)= I(Y;A\mid H) = \Pr(H=\ \perp) I(Y;A)=t(1-h(p)),
\]
so that
\[
\frac{v(A\mid H)}{v(A)}=t.
\]

\subsection{Proof of Theorem~\ref{thm: complementarity gap}}

\newcommand{\anglebkt}[1]{\left\langle#1\right\rangle}
\newcommand{\sqbkt}[1]{\left[#1\right]}
\newcommand{\parbkt}[1]{\left(#1\right)}
\newcommand{\curbkt}[1]{\left\{#1\right\}}
\newcommand{\Norm}[1]{\left\lVert#1\right\rVert}
\newcommand{\R}{\mathbb{R}}
\newcommand{\E}{\mathbb{E}}
\newcommand{\dopt}{\delta^{\star}}
\newcommand{\dbelief}{\widehat{\delta}}

\newcommand{\cond}[2]{\left.#1\,\middle|\,#2\right.}
\newcommand{\condp}[2]{\left(#1\,\middle|\,#2\right)}
\newcommand{\condsq}[2]{\left[#1\,\middle|\,#2\right]}
\newcommand{\condcur}[2]{\left\{#1\,\middle|\,#2\right\}}

\newcommand{\supp}{\mathrm{supp}}

We first recall the definition of Bregman loss functions (BLFs), as per \cite{Banerjee2005Bregman}:
\begin{definition}[Bregman Loss Functions]
    Let $\phi: \mathbb{R}^n \mapsto \mathbb{R}$ be a strictly convex and differentiable function. Then, the Bregman loss function $D_\phi: \mathbb{R}^n \times \mathbb{R}^n \mapsto \mathbb{R}$ is defined as 
    \begin{align*}
        D_\phi (y,d) &= \phi (y) - \phi (d) - \anglebkt{ y-d, \nabla \phi (d) }.
    \end{align*}
\end{definition}
Note that $D_\phi$ is asymmetric.
Two notable examples of BLFs are:
\begin{itemize}
    \item Squared loss: Let $\phi (x) := \anglebkt{x,x}= \Norm{x}^2$. Then, the BLF is the $L_2$-loss: $D_\phi (y,d) = \Norm{y-d}^2$. In particular, when $n=1$ and $Y$ is a scalar, the BLF is the mean squared error (MSE): $D_\phi \parbkt{y,d} = \parbkt{y-d}^2$.
    \item Cross-entropy loss: Let $y$ be deterministic vector in the $n$-simplex, and $\phi \parbkt{x} := \sum_{i=1}^n x_i \log x_i$ be the negative entropy function. Then, the BLF is the KL divergence: $D_\phi \parbkt{y,d} = \KL \parbkt{y \, \middle\Vert \, d}$. In particular, when $y$ is a one-hot vector, $D_\phi \parbkt{y,d} = - \sum_{i=1}^n \mathbf{1}\sqbkt{y_i=1} \log d_i$ is exactly the log loss, and its expectation
    $$ \E_Y \sqbkt{ D_\phi \parbkt{Y,d} } = H \parbkt{p_Y, d},  \quad \forall d\in \R^n $$
    is the cross-entropy loss, where $p_Y \in \R^n$ gives the distribution of $Y$.
\end{itemize}

Given any information set $S$ (such as $H$, $A$ or $(H,A)$), a Bayesian-optimal agent who observes $S=s$ makes the risk-minimizing decision
\begin{align}
    \dopt \parbkt{s} &:= \argmin_{d\in \R^n} \E \condsq{L\parbkt{Y, d}}{S=s}.  \label{eqn:def_opt_predictor}
\end{align}

We cite two useful properties of BLFs from \cite{Banerjee2005Bregman}:
\begin{lemma}[Conditional Expectation Minimizes BLF]  \label{lemma:cond_expect_optimal_blf}
    Let $Y\in \R^n$ be a random variable, $S$ be an information set, $\dopt \parbkt{S}$ be the optimal predictor given $S$ as defined in~\eqref{eqn:def_opt_predictor}, and $L:=D_\phi$ be a BLF. Then, $\dopt \parbkt{S}$ is uniquely given by the conditional expectation:
    \begin{align}
        \dopt \parbkt{s} &= \E \condsq{Y}{S=s},  \quad \forall s\in \supp \parbkt{S}.  \label{eqn:cond_expect_optimal_blf}
    \end{align}
\end{lemma}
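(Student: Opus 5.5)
The plan is to prove the lemma by writing $L=D_\phi$ and showing that, for every realization $S=s$ and every candidate action $d\in\R^n$, the conditional excess risk of $d$ over the conditional mean equals a single Bregman divergence. Fix $s$ in the support of $S$ and set $\mu(s):=\E\condsq{Y}{S=s}$. We assume $Y$ is integrable and $\phi(Y)$ is integrable, as is implicit in the definition of the risk. Expanding the definition of $D_\phi$ gives
\begin{align*}
\E\condsq{D_\phi(Y,d)}{S=s}
&= \E\condsq{\phi(Y)}{S=s}-\phi(d)-\anglebkt{\mu(s)-d,\nabla\phi(d)},
\end{align*}
because the inner product is linear in $Y$ and $d$ is fixed given $S=s$. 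Evaluating the same display at $d=\mu(s)$ makes the inner-product term vanish, which yields $\E\condsq{D_\phi(Y,\mu(s))}{S=s}=\E\condsq{\phi(Y)}{S=s}-\phi(\mu(s))$.

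Subtracting the two expressions, the $\E\condsq{\phi(Y)}{S=s}$ terms cancel and we are left with
\[
\E\condsq{D_\phi(Y,d)}{S=s}-\E\condsq{D_\phi(Y,\mu(s))}{S=s}=\phi(\mu(s))-\phi(d)-\anglebkt{\mu(s)-d,\nabla\phi(d)}=D_\phi(\mu(s),d).
\]
This is the Bregman analogue of the bias--variance identity, and it is also the pointwise identity behind the misspecification term in Theorem~\ref{thm: complementarity gap}.

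The remaining step uses strict convexity of $\phi$. For a differentiable strictly convex function, the first-order condition gives $\phi(x)>\phi(d)+\anglebkt{x-d,\nabla\phi(d)}$ whenever $x\neq d$. Hence $D_\phi(\mu(s),d)\ge 0$, with equality if and only if $d=\mu(s)$. It follows that $\mu(s)$ is the unique minimizer in~\eqref{eqn:def_opt_predictor} for each $s$, which proves~\eqref{eqn:cond_expect_optimal_blf}.

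The main obstacle is measure-theoretic rather than algebraic. The statement holds for every $s$ in the support, so the argument should be phrased with regular conditional distributions, which exist for $Y\in\R^n$. Under those distributions the cancellation above is legitimate once $\E\condsq{\phi(Y)}{S=s}$ and $\mu(s)$ are finite. I would state these integrability conditions explicitly, because otherwise the minimization in~\eqref{eqn:def_opt_predictor} is not well defined.
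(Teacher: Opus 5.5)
Your proof is correct and follows essentially the same route as the paper. The paper does not reprove the lemma: it cites Theorem~1 of \cite{Banerjee2005Bregman} and remarks that the three-point identity (Lemma~\ref{lemma:triangle_blf}) immediately yields it. Your computation of the conditional excess risk as $D_\phi(\mu(s),d)$, followed by strict convexity of $\phi$ for nonnegativity and uniqueness, is exactly that standard argument carried out conditionally on $S=s$, which suits the pointwise-in-$s$ form of the statement given the integrability and regular-conditional-distribution caveats you already flag.
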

\begin{lemma}[Three-Point Lemma]  \label{lemma:triangle_blf}
    Let $Y\in \R^n$ be a random variable, $S$ be an information set, $\dopt \parbkt{S} = \E \condsq{Y}{S=s}$, $\dbelief\parbkt{S}$ be an arbitrary predictor given $S$, and $D_\phi$ be a BLF. Then,
    \begin{align}
        \E_{Y,S} \sqbkt{ D_\phi \parbkt{Y, \dbelief\parbkt{S}} } - \E_{Y,S} \sqbkt{ D_\phi \parbkt{Y, \dopt\parbkt{S}} } &= \E_S \sqbkt{ D_\phi \parbkt{\dopt\parbkt{S}, \dbelief\parbkt{S}} }.  \label{eqn:triangle_blf}
    \end{align}
\end{lemma}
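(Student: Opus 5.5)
The Three-Point Lemma is a standard Bregman identity, so I would prove it by direct expansion of the definition followed by the tower property. Write $\dopt=\dopt(S)$ and $\dbelief=\dbelief(S)$. By the definition of $D_\phi$,
\begin{align*}
D_\phi(Y,\dbelief)-D_\phi(Y,\dopt)
&=\phi(\dopt)-\phi(\dbelief)-\anglebkt{Y-\dbelief,\nabla\phi(\dbelief)}+\anglebkt{Y-\dopt,\nabla\phi(\dopt)}.
\end{align*}
The $\phi(Y)$ terms cancel, so the difference is a function of $(Y,S)$ that is affine in $Y$, with coefficients that are measurable with respect to $S$.

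The key step is to take the conditional expectation given $S$. Since $\dopt$ and $\dbelief$ are functions of $S$ and $\E\condsq{Y}{S}=\dopt$ (Lemma~\ref{lemma:cond_expect_optimal_blf}), linearity gives two things. First, the term $\anglebkt{Y-\dopt,\nabla\phi(\dopt)}$ has conditional mean zero. Second, $\anglebkt{Y-\dbelief,\nabla\phi(\dbelief)}$ has conditional mean $\anglebkt{\dopt-\dbelief,\nabla\phi(\dbelief)}$. Hence
\[
\E\condsq{D_\phi(Y,\dbelief)-D_\phi(Y,\dopt)}{S}=\phi(\dopt)-\phi(\dbelief)-\anglebkt{\dopt-\dbelief,\nabla\phi(\dbelief)}=D_\phi(\dopt,\dbelief).
\]
Taking the outer expectation over $S$ (tower property) then yields~\eqref{eqn:triangle_blf}.

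The only real obstacle is integrability. Pulling $\nabla\phi(\dbelief(S))$ out of the conditional expectation, and splitting the expectation of the difference into a difference of expectations, both require the relevant terms to be integrable. I would state the standing assumption that $\E\Norm{Y}<\infty$, that $\E|\phi(Y)|<\infty$, and that both expected losses are finite. Under these conditions each step is justified by the "taking out what is known" property, possibly after a truncation or localization on events $\{\Norm{\nabla\phi(\dbelief)}\le K\}$ followed by monotone convergence. With the lemma established, Theorem~\ref{thm: complementarity gap} follows by applying it with $S=(H,A)$, adding and subtracting $L_0$, and using the definition $v(A\mid H)=v(A,H)-v(H)$.
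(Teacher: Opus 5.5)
Your proof is correct. The paper gives no proof of its own (it cites the identity as Equation~(1) of \cite{Banerjee2005Bregman}), and your expand-then-condition computation is the standard argument behind that identity. What you add is self-containment plus the integrability hypotheses the paper leaves implicit; without them the left-hand side can be $\infty-\infty$.

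Two refinements:
\begin{itemize}
\item For the localization, truncating only on $\nabla\phi(\widehat{\delta})$ is not enough. Truncate on an $S$-measurable set where $\delta^{\star}$, $\widehat{\delta}$, and $\phi$ and $\nabla\phi$ evaluated at both are all bounded. Then use dominated convergence on the left (the loss difference is integrable when both expected losses are finite) and monotone convergence on the right, whose integrand is nonnegative. The assumption $\mathbb{E}|\phi(Y)|<\infty$ is unnecessary, since $\phi(Y)$ cancels pointwise.
\item Do not cite Lemma~\ref{lemma:cond_expect_optimal_blf} for $\mathbb{E}[Y\mid S]=\delta^{\star}(S)$. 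The statement already takes this as the definition of $\delta^{\star}$. Moreover, the paper derives the uniqueness part of Lemma~\ref{lemma:cond_expect_optimal_blf} from this very lemma, so citing it invites circularity.
\end{itemize}

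Your closing remark is also right. A single application with $S=(H,A)$ is more economical than the paper's proof of Theorem~\ref{thm: complementarity gap}. The paper applies the lemma a second time, to $\delta(H)$, only to rewrite $v(A\mid H)$ as an expected divergence and then undo that rewriting.
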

Lemma~\ref{lemma:triangle_blf} and Lemma~\ref{lemma:cond_expect_optimal_blf} are Equation (1) and Theorem 1 in~\cite{Banerjee2005Bregman}, respectively. Note that Lemma~\ref{lemma:triangle_blf} immediately implies uniqueness in Lemma~\ref{lemma:cond_expect_optimal_blf}.
\citeauthor{Banerjee2005Bregman} also show that under mild conditions, \eqref{eqn:cond_expect_optimal_blf} holds \textit{only} if the loss function is a BLF, thus establishing BLF as necessary and sufficient.

We now prove Theorem~\ref{thm: complementarity gap}:

\newcommand{\Lhat}{\widehat{L}_{\mathrm{H-AI}}}
\newcommand{\Lopt}{L^{\star}_{\mathrm{H-AI}}}
\newcommand{\LH}{L_{\mathrm{H}}}
\newcommand{\decruleH}{\delta(H)}

\begin{proof}
    Denote expected losses with
    \begin{align*}
        \Lhat &:= \E_{Y,H,A} \sqbkt{ L \parbkt{\dbelief\parbkt{H,A} , Y} },  && \text{(realized AI-assisted human loss)}  \\
        \Lopt &:= \E_{Y,H,A} \sqbkt{ L \parbkt{\dopt\parbkt{H,A} , Y} },  && \text{(optimal AI-assisted human loss)}  \\
        \LH &:= \E_{Y,H,A} \sqbkt{ L \parbkt{\decruleH , Y} }.  && \text{(optimal human-only loss)}
    \end{align*}
    Note that we had defined $\decruleH$ as the Bayesian-optimal human-only decision rule.
    Applying Lemma~\ref{lemma:triangle_blf} to $S=\parbkt{H,A}$, $\dbelief\parbkt{S} = \dbelief\parbkt{H,A}$ and $D_{\phi}=L$ gives
    \begin{align}
        \Lhat - \Lopt &= \E_{H,A} \sqbkt{ L \parbkt{\dopt \parbkt{H,A}, \dbelief\parbkt{H,A}} }.  \label{eqn:thm1_proof_loss_diff_deltahat}
    \end{align}
    Likewise, apply Lemma~\ref{lemma:triangle_blf} to $S=\parbkt{H,A}$ and $\dbelief\parbkt{S} = \decruleH = \E\condsq{Y}{H}$ (the last equality is given by Lemma~\ref{lemma:cond_expect_optimal_blf}), and relate to Definition~\ref{def:marg_info} that gives $v\condp{A}{H}$. This gives
    \begin{align}
        v\condp{A}{H} := \LH - \Lopt &= \E_{H,A} \sqbkt{ L \parbkt{\dopt \parbkt{H,A}, \decruleH} },  \label{eqn:thm1_proof_loss_diff_doptH}
    \end{align}
    where the first equality is exactly the definition of marginal value. From~\eqref{eqn:thm1_proof_loss_diff_deltahat} and~\eqref{eqn:thm1_proof_loss_diff_doptH},
    \begin{align*}
        \LH - \Lhat &= \E_{H,A} \sqbkt{ L \parbkt{\dopt \parbkt{H,A}, \decruleH} } - \E_{H,A} \sqbkt{ L \parbkt{\dopt \parbkt{H,A}, \dbelief\parbkt{H,A}} }  \\
        &= v\condp{A}{H} - \E_{H,A} \sqbkt{ L \parbkt{\dopt \parbkt{H,A}, \dbelief\parbkt{H,A}} },
    \end{align*}
    which completes the proof.
\end{proof}

\subsection{Proof of Lemma~\ref{lem:decomposition}}
We first prove that $\tilde A\perp\!\!\!\perp H\mid Y$. Note that
\[
\Cov(\tilde A,H\mid Y)=\frac{1}{1-\lambda}\left(\Cov(A,H)-\lambda\cdot \Var(H\mid Y)\right)=0.
\]
Since $(H,\tilde A)$ is also jointly Gaussian distributed conditional on $Y$, we have $\tilde A\perp\!\!\!\perp H\mid Y$.

Therefore, from conditional independence, we obtain
\[
\Var (A\mid Y)= \lambda^2 \Var (H\mid Y)+ (1-\lambda)^2\Var (\tilde A\mid Y),
\]
which gives the expression for $\Var (\tilde A\mid Y)$.

Finally, we have
\begin{align*}
\Cov(H,A\mid Y)
&= \sum_{i\in A\cap H} \frac{1}{|\cH|}\frac{1}{|\cA|}\cdot \Var(s_i\mid Y) = \frac{|\cA\cap \cH|}{|\cH||\cA|}N\tau^{-1}.
\end{align*}
Similarly,
\[
\Var(H\mid Y)
= \sum_{i\in H}\frac{1}{|\cH|^2}\cdot \Var(s_i\mid Y)
= \frac{1}{|\cH|}\cdot N\tau^{-1}.
\]
Therefore,
\[
\lambda
= \frac{\Cov(H,A\mid Y)}{\Var(H\mid Y)}
= \frac{\frac{|\cA\cap \cH|}{|\cH||\cA|}}{\frac{1}{|\cH|}}
= \frac{|\cA\cap \cH|}{|\cA|}.
\]

\subsection{Proof of Proposition~\ref{prop:comparative_statics_bayes}}
The marginal information value $v(A\mid H)$ can be written as
\[
v(A\mid H)= \frac{1}{\tau_0+\tau_H}-\frac{1}{\tau_0+\tau_H+\tilde\tau}
\]
where
\[
\tilde\tau= \frac{(1-\lambda)^2}{\tau_A^{-1}-\lambda^2\tau_H^{-1}}
\]
First, we show that $\tilde\tau$ is increasing in $\tau_A$, deceasing in $\tau_H$, and decreasing in $\lambda$. The first two is straightforward; for the third, note that
\begin{align*}
\frac{\partial \tilde \tau}{\partial \lambda}&=\frac{2\lambda \tau_H^{-1}(1-\lambda)^2-2(1-\lambda)(\tau_A^{-1}-\lambda^2\tau_H^{-1})}{(\tau_A^{-1}-\lambda^2\tau_H^{-1})^2}\\
&=\frac{2(1-\lambda)(\lambda \tau_H^{-1}-\tau_A^{-1})}{(\tau_A^{-1}-\lambda^2\tau_H^{-1})^2}<0\\
\end{align*}
Therefore, $v(A\mid H)$ is increasing in $\tau_A$ and decreasing in $\lambda$ since $\tau_A$ and $\lambda$ affects $v(A\mid H)$ only through $\tilde\tau$. Finally, $v(A\mid H)$, as a function of $\tau_H$ and $\tilde\tau$, is decreasing in $\tau_H$ and increasing in $\tilde\tau$; combing the above result, $v(A\mid H)$ is decreasing in $\tau_H$.

\subsection{Proof of Proposition~\ref{prop:withhold_threshold}}
The difference between $L_{H-AI}$ and $L_H$ can be computed as follows: (recall $T=\tau_0+\tau_H+\tau_A$)
\begin{align*}
    L_{H-AI}-L_H&=\frac{1}{T}+\frac{2\lambda\tau_A}{T^2}-\frac{1}{\tau_0+\tau_H} \\
    &= \frac{(\tau_0+\tau_H)(\tau_0+\tau_H+(1+2\lambda)\tau_A)-(\tau_0+\tau_H+\tau_A)^2}{T^2(\tau_0+\tau_H)}\\
    &= \frac{\tau_A((2\lambda-1)(\tau_0+\tau_H)-\tau_A)}{T^2(\tau_0+\tau_H)}\\
\end{align*}
Therefore, if $\lambda<\frac12$, we always have $L_{H-AI}-L_H<0$.

If $\lambda>\frac12$, $L_{H-AI}-L_H\leq0$ if and only if
\[
(2\lambda-1)(\tau_0+\tau_H)- \tau_A\leq 0,
\]
which is equivalent to $\tau_A\geq \tau_{\rm aug}(\lambda)$ as desired.
\subsection{Proof of Proposition~\ref{prop:automation}}
Solving for $L_{AI}<L_{H-AI}$, we obtain that 
\[
L_{AI}<L_{H-AI} \Leftrightarrow 2\lambda \tau_A(\tau_0+\tau_A)-\tau_H(\tau_0+\tau_H+\tau_A)>0.
\]
The quadratic has a unique positive root \[
\tau_{\mathrm{auto}}(\lambda)= \frac{\tau_H-2\lambda\tau_0+\sqrt{(\tau_H-2\lambda\tau_0)^2+8\lambda\tau_H(\tau_0+\tau_H)}}{4\lambda},
\]
so that we further have
\[
L_{AI}<L_{H-AI} \Leftrightarrow\tau_A > \tau_{\mathrm{auto}}(\lambda)
\]

Moreover, it is straightforward that
\[
L_{AI}<L_{H} \Leftrightarrow \tau_A>\tau_H.
\]
Therefore,
\[
L_{AI}<\min\{L_{H-AI},L_H\}\Leftrightarrow \tau_A>\max\{\tau_{\mathrm{auto}}(\lambda),\tau_H\}.
\]
Finally, solving for $\tau_{\mathrm{auto}}(\lambda)\geq \tau_H$, we obtain
\[
\tau_{\mathrm{auto}}(\lambda)\geq \tau_H\Leftrightarrow \lambda \leq \bar \lambda = \frac{1}{2}+\frac{\tau_H}{2(\tau_0+\tau_H)}.
\]
The rest of the statements in the proposition follows directly.

\subsection{Proof of Proposition~\ref{prop:complementarity}}
From the above proof, we know that for complementarity $L_{\mathrm{H-AI}} < \min\{L_{\mathrm{AI}}, L_{\mathrm{H}}\}$ to hold, we need 
\[
\tau_A<\tau_{\mathrm{auto}}(\lambda)\text{ and } \tau_A>\tau_{\rm aug}(\lambda).
\]
Therefore, complementarity is possible if and only if the above condition is possible, which is equivalent to 
\[
\tau_{\rm aug}(\lambda)<\tau_{\mathrm{auto}}(\lambda).
\]
Solving for the range of $\lambda$, the above condition is again equivalent to 
\[
\lambda < \bar \lambda = \frac{1}{2}+\frac{\tau_H}{2(\tau_0+\tau_H)}
\]

If $\lambda \leq \frac12$, then $L_{H-AI}<L_H$ holds for any $\tau_A>0$, so that complementarity only needs $L_{H-AI}<L_{AI}$, which is equivalent to $\tau_A<\tau_{\mathrm{auto}}(\lambda)$.

If $\frac12<\lambda<\bar \lambda$, then complementarity is equivalent to 
\[
\tau_A<\tau_{\mathrm{auto}}(\lambda)\text{ and } \tau_A>\tau_{\rm aug}(\lambda).
\]
i.e., $\tau_A\in(\tau_{\rm aug}(\lambda), \tau_{\rm auto}(\lambda))$.
\end{appendices}

\end{document}